\documentclass[aip, cha, preprint, groupedaddress]{revtex4-1}


\usepackage{amsmath, amssymb, amsthm}
\usepackage[english]{babel}
\usepackage{xy}
\usepackage{graphicx}
\usepackage{verbatim}
\usepackage{enumerate}
\usepackage{url}
\usepackage{subfigure}
\usepackage[refpage]{nomencl}
\usepackage{float}
\usepackage[sort&compress]{natbib}
\usepackage{times}



\newtheorem{thm}{Theorem}[section]
\newtheorem{cor}[thm]{Corollary}
\newtheorem{lem}[thm]{Lemma}
\newtheorem{prop}[thm]{Proposition}

\newtheorem{defn}[thm]{Definition}

\newtheorem{rem}[thm]{Remark}

\DeclareMathOperator{\id}{id}

\DeclareMathOperator{\card}{\#}
\DeclareMathOperator{\Rep}{Re}

\DeclareMathOperator{\Per}{Per}

\DeclareMathOperator{\FP}{Fix}

\DeclareMathOperator{\grad}{grad}


\newcommand{\maxdim}{N}

\newcommand{\C}{\mathbb{C}}
\newcommand{\Cs}{\C^{\times}}

\newcommand{\Z}{\mathbb{Z}}

\newcommand{\Q}{\mathbb{Q}}
\newcommand{\R}{\mathbb{R}}
\newcommand{\Rr}{\R}
\newcommand{\Rn}{\R^\maxdim}
\newcommand{\N}{\mathbb{N}}

\newcommand{\Hp}{\mathcal{H}}



\renewcommand{\S}{\mathbf{S}}

\newcommand{\So}{\S^{1}}



\newcommand{\Cc}{\mathcal{C}}


\newcommand{\ud}{\mathrm{d}}

\newcommand{\s}{\sigma}

\newcommand{\rset}[2]{\left\lbrace\, #1\,\left|\;#2\right.\right\rbrace}

\newcommand{\set}[2]{\rset{#1}{#2}}
\newcommand{\tset}[2]{\big\lbrace #1\,\big|\;#2\big\rbrace}
\newcommand{\sset}[1]{\left\lbrace #1\right\rbrace}
\newcommand{\tsset}[1]{\big\lbrace #1\big\rbrace}
\newcommand{\abs}[1]{\left| #1 \right|}

\renewcommand{\Re}{\Rep}
\newcommand{\ind}[1]{_{\mathrm{#1}}}

\newcommand{\vrho}{\varrho}
\newcommand{\sr}{\vrho}

\newcommand{\ite}[2]{#1^{\circ #2}}
\newcommand{\xf}{{x^{*}}}

\newcommand{\alts}[1]{\hat{#1}}

\newcommand{\norm}[1]{\|#1\|}
\newcommand{\enorm}{\norm{\,\cdot\,}}

\newcommand{\drv}[2]{\ud#1|_{#2}}

\newcommand{\prn}{n}
\newcommand{\prm}{m}

\newcommand{\pn}[2]{#1 \backslash \left\lbrace #2\right\rbrace}
\newcommand{\FPs}{\FP^{*}}

\newcommand{\imagescaling}{0.9}
\newcommand{\CO}{(color online) }

\bibpunct{[}{]}{,}{s}{}{;}

\begin{document}


\title{Controlling Chaos Faster}

\author{Christian Bick${}^{a,b,c}$}
\author{Christoph Kolodziejski${}^{a,d}$}
\author{Marc Timme${}^{a,e}$}
\address{${}^{a}$Network Dynamics, Max Planck Institute for Dynamics and Self-Organization (MPIDS), 37077 G\"ottingen, Germany\\
${}^b$Bernstein Center for Computational Neuroscience (BCCN), 37077 G\"ottingen, Germany\\
${}^c$Institute for Mathematics, Georg--August--Universit\"at G\"ottingen, 37073 G\"ottingen, Germany\\
${}^d$III.~Physical Institute---Biophysics, Georg--August--Universit\"at G\"ottingen, 37077~G\"ottingen, Germany\\
${}^e$Institute for Nonlinear Dynamics, Georg--August--Universit\"at G\"ottingen, 37077~G\"ottingen, Germany}

\altaffiliation{CB currently at Department of Mathematics, Rice University, MS--136, 7500 Main St., Houston, TX 77005, USA}

\date{\today}

\begin{abstract}
Predictive Feedback Control is an easy-to-implement method to stabilize
unknown unstable periodic orbits in chaotic dynamical systems. Predictive 
Feedback Control is severely  limited because asymptotic convergence speed 
decreases with stronger instabilities which in turn are typical for larger 
target periods, rendering it harder to effectively stabilize
periodic orbits of large period. Here, we study stalled chaos control,
where the application of control is stalled to make use of the chaotic, 
uncontrolled dynamics, and introduce an adaptation paradigm to overcome 
this limitation and speed up convergence. 
This modified control scheme is not only capable of stabilizing more 
periodic orbits than the original Predictive Feedback Control but also 
speeds up convergence for 
typical chaotic maps, as illustrated in both theory and application. The 
proposed adaptation scheme provides a way to tune parameters online, 
yielding a broadly applicable, fast chaos control that converges reliably, 
even for periodic orbits of large period.
\end{abstract}

\pacs{05.45.Gg, 02.30.Yy, 05.45.-a}

\maketitle


\begin{quotation}
Chaos control underlies a broad range of applications across physics 
and beyond. To successfully use chaos control schemes in applications,
different robustness and convergence properties need to be considered 
from a practical point of view. For instance, for control to be 
useful in praxis, a method does not only need to guarantee 
convergence to the desired state, but convergence also has to be 
sufficiently fast.

Predictive Feedback Control provides an easy-to-implement way to 
realize chaos control in discrete time dynamical systems (iterated 
maps). However, periodic orbits of larger periods are typically 
highly unstable, leading to slow convergence. Here, we systematically 
investigate a recently introduced extension of Predictive Feedback 
Control obtained by stalling control and complement it with an 
adaptation mechanism. The stalling of control, i.e., repeated 
transient interruption of control, takes advantage of the 
uncontrolled chaotic dynamics, thereby speeding up convergence. 
Adaptation provides a way to tune the control parameters online to 
values which yield optimal speed.

Specifically, we show how the efficiency of stalling control depends 
on both the local stability properties of the periodic orbits to be 
stabilized and the choice of control parameters. Furthermore, we derive 
conditions for stabilizability of periodic orbits in systems of higher 
dimensions. In addition to speeding up convergence, the gradient 
adaptation scheme presented also further increases the overall 
convergence reliability. Hence, Adaptive Stalled Predictive Feedback 
Control yields an easy-to-implement, noninvasive, fast, and reliable 
chaos control method for a broad scope of applications.
\end{quotation}

\section{Introduction}

Typically, chaotic attractors contain infinitely many unstable periodic 
orbits~\cite{Katok1995}. The goal of chaos control is to render
these orbits stable. After first being introduced in the seminal work
by Ott, Grebogi, and Yorke \cite{Ott1990} about two decades ago, it has
not only been hypothesized to be a mechanism exploited in biological
neural networks \cite{Rabinovich1998} but it has found its way into many
applications \cite{Scholl2007, Garfinkel1992} including
chaotic lasers, stabilization of cardiac rhythms, and more recently into
the control of autonomous robots \cite{Steingrube2010}.

Predictive Feedback Control (PFC) \cite{DeSousaVieira1996, Polyak2005}
is well suited for applications: little to no prior knowledge about the
system is required, it is non-invasive, i.e., control strength vanishes
upon convergence, and it is very easy to implement due to the nature
of the control transformation. In PFC, a prediction of the future
state of the system together with the current state is fed back
into the system as a control signal, similar to time-delayed feedback
control \cite{Pyragas1992}. In fact, it can be viewed as a special case
of a recent effort to determine all unstable periodic points of a discrete
time dynamical system \cite{Schmelcher1997, Schmelcher1998} which has
been studied and extended \cite{Pingel2000, Crofts2006, Doyon2002,
Davidchack1999} for its original purpose.

In any real world application not only the existence of parameters that
lead to stabilization, but also the speed of convergence is of importance.
Speed is crucial, for example, if a robot is controlled by stabilizing
periodic orbits in a chaotic attractor~\cite{Steingrube2010}, since 
the time
it needs to react to a changing environment is bounded by the time
the system needs to converge to a periodic orbit of a given period. In
most of the literature, however, speed of convergence has been
overlooked. Stabilizing periodic orbits of higher periods
becomes quite a challenge; due to the increasing instability of the
orbits, the PFC method yields only poor performance in terms of
asymptotic convergence
speed even when the control parameter is chosen optimally. Any method
optimizing speed within the PFC framework~\cite{Bick2010b} therefore
is subject to the same limitation.

In this article we investigate Stalled Predictive Feedback Control
(SPFC), a recently proposed extension of Predictive Feedback
Control that can overcome this ``speed limit''\cite{Bick2012}.
Here, we derive conditions for the local stability properties of
periodic orbits that imply stabilizability. Furthermore, we propose
an adaptation mechanism that is capable of tuning the control 
parameter online to reach optimal asymptotic convergence speed
within the regime of convergence. The resulting adaptive SPFC
method
is an easy-to-implement, non-invasive, and broadly applicable
chaos control method that stabilizes even periodic orbits of large
periods reliably without the need to fine-tune parameter values
a priori.

This article is organized as follows. In the following section,
we formally introduce the PFC method, briefly discuss its 
limitations and present SPFC as an alternative. The third
section is dedicated to an in-depth look at the SPFC method;
we identify regimes in parameter space in which stabilization
is successful. In the fourth section, we apply our algorithm
to ``typical'' maps with chaotic dynamics and calculate and 
compare convergence speeds.
Adaptive methods for the control parameter are explored in
Section~\ref{sec:Adaptation} before giving some concluding remarks.

\section{Preliminaries}\label{sec:Prelim}

Suppose $f: \Rn\to\Rn$ is a differentiable map such that the
iteration given by the evolution equation $x_{k+1}=f(x_k)$ gives
rise to a chaotic attractor $A\subset\Rn$ with a dense set of
unstable periodic orbits. We refer to such a map as a 
\emph{chaotic map}. Let $\FP(f)=\set{\xf\in\Rn}{f(\xf)=\xf}$ denote
the set of fixed points of $f$ and $\id$ the identity map
on $\Rn$. The main result of 
Schmelcher and Diakonos\cite{Schmelcher1998} reads as follows.

\begin{prop}\label{prop:PFC}
Suppose $\FPs(f)\subset\FP(f)$ is the set of fixed points such
that both $\drv{f}{\xf}$ and $\drv{f}{\xf}-\id$ are nonsingular
and diagonalizable (over $\C$). Then there exist finitely many
orthogonal matrices $M_k\in O(\maxdim)$, $k=1, \ldots, K$, such
that we have
\[\FPs(f) = \bigcup_{k=1}^{K}\Cc(f, M_k)\]
where the sets $\Cc(f, M_k)$ are characterized by the
the property that for $\xf\in\Cc(f, M_k)$ there exists 
$\mu\in(0,1)$ such that $\xf$ is a stable fixed point of the
map $g_{\mu, 1}$ obtained by the transformation
$S(\mu, M_k): f\mapsto\id+\mu M_k(f-\id)=g_{\mu, 1}.$
\end{prop}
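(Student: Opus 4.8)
The plan is to reduce the statement to a linear-algebra fact about a single matrix and then patch the finitely many local pieces together. Fix $\xf\in\FPs(f)$ and write $J=\drv{f}{\xf}$. Since $\xf$ is a fixed point, the transformed map $g_{\mu,M}:=\id+\mu M(f-\id)$ also fixes $\xf$, and its derivative there is $\drv{g_{\mu,M}}{\xf}=\id+\mu M(J-\id)$. By the principle of linearized stability, $\xf$ is an (asymptotically) stable fixed point of $g_{\mu,M}$ as soon as every eigenvalue of $\id+\mu M(J-\id)$ lies strictly inside the unit disc. So the whole proposition comes down to: for each admissible $\xf$, produce an orthogonal $M$ and a $\mu\in(0,1)$ making the spectral radius of $\id+\mu M(J-\id)$ less than $1$.

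First I would treat $B:=J-\id$, which is nonsingular and diagonalizable by hypothesis. The key observation is that for small $\mu>0$ the eigenvalues of $\id+\mu MB$ are $1+\mu\lambda_j(MB)+o(\mu)$, so $\abs{1+\mu\lambda_j}^2 = 1+2\mu\,\Re\lambda_j+O(\mu^2)$; hence stability for small $\mu$ is guaranteed provided $\Re\lambda_j(MB)<0$ for all eigenvalues $\lambda_j$ of $MB$. Thus I need an orthogonal $M$ such that $MB$ has spectrum in the open left half-plane. This is the heart of the matter: given a nonsingular real matrix $B$, choose $M\in O(\maxdim)$ with $\mathrm{spec}(MB)\subset\{\Re z<0\}$. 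I would obtain it from a polar-type or sign decomposition: write $-B = QP$ with $Q\in O(\maxdim)$ and $P$ symmetric positive definite (polar decomposition, available since $B$ is nonsingular), and set $M=Q^{-1}$, so that $MB=-P$ has strictly negative real eigenvalues. Then for any such $\xf$ there is a threshold $\mu_0(\xf)\in(0,1]$ so that $\xf$ is stable for $g_{\mu,M}$ whenever $0<\mu<\mu_0(\xf)$; in particular some admissible $\mu\in(0,1)$ works. Define $\Cc(f,M)$ to be the set of $\xf\in\FPs(f)$ for which this particular $M$ succeeds for some $\mu\in(0,1)$.

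Next I would argue finiteness of the number of matrices needed. Because $\FPs(f)$ is contained in a compact attractor and consists of fixed points at which $J-\id$ is invertible, these points are isolated, hence $\FPs(f)$ is finite; so a priori one $M_{\xf}$ per point already gives a finite list, and one may then discard repetitions to index them as $M_1,\dots,M_K$. (If one wants the sharper reading that finitely many $M_k$ suffice even for infinitely many fixed points, one instead notes that the assignment $\xf\mapsto$ ``polar factor of $-(J_{\xf}-\id)$'' is continuous in $\xf$ and $O(\maxdim)$ is compact, so a finite cover by neighborhoods on which a single $M_k$ works — by openness of the strict spectral condition — does the job.) Finally, $\FPs(f)=\bigcup_{k=1}^K\Cc(f,M_k)$ follows: every admissible $\xf$ lies in at least one $\Cc(f,M_k)$ by construction, and conversely each $\Cc(f,M_k)$ is a subset of $\FPs(f)$ by definition.

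The main obstacle is the linear-algebra step of producing the orthogonal matrix $M$ with $\mathrm{spec}(MB)$ in the open left half-plane and confirming that the resulting $\mu_0$ can be taken in $(0,1)$: the polar-decomposition argument gives $MB=-P$ cleanly, but one must still check that the quadratic-in-$\mu$ correction terms $\abs{1+\mu\lambda_j}^2<1$ hold on a genuine interval $(0,\mu_0)$ with $\mu_0\le 1$, which is where the diagonalizability of $J-\id$ (controlling the eigenvalues rather than just a Jordan form) and the nonsingularity of $J$ (ruling out the eigenvalue $1$ of $\id+\mu MB$, i.e.\ $\lambda_j\ne 0$) are both used. Everything else is bookkeeping.
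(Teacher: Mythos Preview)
The paper does not prove this proposition; it is quoted as the main result of Schmelcher and Diakonos and left unproved, so there is no in-paper argument to compare against directly. Your polar-decomposition step is correct and in fact cleaner than you present it: with $M=Q^{-1}$ you get $MB=-P$ symmetric positive definite, so the eigenvalues of $\id+\mu MB$ are \emph{exactly} $1-\mu p_j$ with $p_j>0$, and any $\mu\in(0,\min(1,2/\max_j p_j))$ places them in the open unit disc. No $o(\mu)$ estimate is needed, and neither the diagonalizability of $J$ nor of $J-\id$ is actually used at this point.

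Where your argument diverges from the Schmelcher--Diakonos original is in the meaning of ``finitely many $M_k$.'' Their result is that a \emph{universal} finite family---signed permutation matrices, so $K\le N!\cdot 2^N$---depending only on the dimension $N$ suffices for every map $f$ and every fixed point in $\FPs(f)$. That universality is the operative content: it is what allows PFC to be applied without knowledge of the Jacobian, and it is why the paper's corollary can restrict attention to the two fixed matrices $M_k\in\{\pm\id\}$. Your $M$ is built from the polar factor of $-(J_{\xf}-\id)$ and therefore depends on both $f$ and $\xf$; the finiteness you obtain comes only from counting fixed points (and your compactness alternative still needs $\FPs(f)$ to sit in a compact set, which the bare statement does not assume). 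As a proof of the proposition as literally worded your argument is fine, but it does not recover the dimension-only bound on $K$ that the cited result provides and that the downstream use relies on.
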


\subsection{Predictive Feedback Control}

This result may be cast into a control method. Let~$\N$ denote the 
set of natural numbers. A periodic orbit of period $p\in\N$ is a fixed
point of the $p$th iterate of~$f$ denoted by
\[f_p:=\ite{f}{p}=\underbrace{f\circ\cdots\circ f}_{p \text{ times}},\]
and therefore we use the terms fixed point and periodic orbit
interchangeably depending on what is convenient in the context.
Let $\Per(f) = \bigcup_{p\in\N}\FP(f_p)$ denote the set of all periodic
points of~$f$.
Define the set of periodic orbits of minimal period~$p$ as
$\FP(f, p) = \set{\xf\in\FP(f_p)}{\ite{f}{q}(\xf)\neq\xf \text{ for }q<p}$.
Furthermore, we define $\FPs(f, p) = \FP(f, p)\cap \FPs(f_p)$.
Predictive Feedback Control is now a consequence of 
Proposition~\ref{prop:PFC} by replacing~$f$ with~$f_p$.

\begin{cor}
Let $p\in\N$. For every $\xf\in\FPs_g(f, p) := \FPs(f, p)\cap\left(\Cc(f_p, \id)
\cup\Cc(f_p, -\id)\right)$ there exists a $\mu\in(-1, 1)$ such that
$\xf$ is a stable fixed point of the \emph{Predictive Feedback
Control} method given by the iteration
\[x_{k+1} = g_{\mu, p}(x_k+1) := f_p(x_k)+\eta(x_k-f_p(x_k))\]
with $\eta = 1-\mu$ and \emph{control perturbation} 
$c_{\mu, p}(x) = \eta\left(x_k-f_p(x_k)\right)$.
\end{cor}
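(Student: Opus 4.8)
The plan is to derive this corollary as a direct specialization of Proposition~\ref{prop:PFC}, applied to the iterated map $f_p$ in place of $f$. First I would observe that a point $\xf\in\FP(f,p)$ is by definition a fixed point of $f_p$, so it makes sense to invoke the proposition for the map $f_p$; the hypothesis $\xf\in\FPs(f,p)$ precisely guarantees $\xf\in\FPs(f_p)$, i.e.\ that both $\drv{(f_p)}{\xf}$ and $\drv{(f_p)}{\xf}-\id$ are nonsingular and diagonalizable over $\C$, which is exactly what Proposition~\ref{prop:PFC} requires of its fixed points. Hence Proposition~\ref{prop:PFC} applies and yields orthogonal matrices $M_1,\dots,M_K\in O(\maxdim)$ with $\FPs(f_p)=\bigcup_k\Cc(f_p,M_k)$.

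Next I would restrict attention to the two distinguished matrices $M=\id$ and $M=-\id$, which is why the corollary intersects with $\Cc(f_p,\id)\cup\Cc(f_p,-\id)$: for $\xf$ in this set, the proposition provides $\mu\in(0,1)$ such that $\xf$ is a stable fixed point of $\id+\mu M(f_p-\id)$ with $M=\pm\id$. When $M=\id$ this transformed map is $\id+\mu(f_p-\id)$, and when $M=-\id$ it is $\id-\mu(f_p-\id)=\id+(-\mu)(f_p-\id)$. In both cases the map has the form $x\mapsto f_p(x)+\eta(x-f_p(x))$ with $\eta=1-\mu$ (using $\mu$ in the first case and $-\mu$ in the second), and $\mu$ ranging over $(0,1)$ in the first case and $(-1,0)$ in the second together give $\mu\in(-1,1)$. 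I would then just unwind the definition: $\id+\mu(f_p-\id)$ evaluated at $x$ equals $x+\mu(f_p(x)-x)=f_p(x)+(1-\mu)(x-f_p(x))=g_{\mu,p}(x)$ with control perturbation $c_{\mu,p}(x)=\eta(x-f_p(x))$, matching the stated iteration. Stability of $\xf$ as a fixed point of this map is inherited verbatim from the conclusion of Proposition~\ref{prop:PFC}.

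The only genuine content to check beyond bookkeeping is the sign/range accounting for $\mu$: the proposition delivers $\mu\in(0,1)$, but applying it with $M=-\id$ effectively flips the sign, so that the union over the two cases is $\mu\in(-1,1)\setminus\{0\}$, and one should note $\mu=0$ is irrelevant (it would give the trivial map $\id$). I would phrase the statement as "there exists $\mu\in(-1,1)$" and note that the value $0$ never arises. I do not anticipate a serious obstacle here; the corollary is essentially a restatement of the proposition with $f\rightsquigarrow f_p$ and $M\rightsquigarrow\pm\id$, and the main care needed is to keep the substitution $\eta=1-\mu$ consistent with both sign choices for $M$ and to verify the algebraic identity $\id+\mu M(f_p-\id)=g_{\mu,p}$ for $M=\pm\id$.
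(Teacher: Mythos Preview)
Your proposal is correct and matches the paper's own approach: the paper does not give a separate proof of the corollary but simply states that ``Predictive Feedback Control is now a consequence of Proposition~\ref{prop:PFC} by replacing~$f$ with~$f_p$,'' which is precisely the specialization you carry out, including the sign accounting for $M=\pm\id$ that collapses the two cases $\mu\in(0,1)$ and $\mu\in(-1,0)$ into $\mu\in(-1,1)$.
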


The elements of $\FPs_g(f, p)$ are referred to as \emph{PFC-stabilizable}
periodic orbits of period~$p$. The cardinality of the set $\FPs_g(f, p)$
depends on the chaotic map~$f$ and contains roughly half of the periodic
orbits of a given period in two-dimensional 
systems~\cite{Schmelcher1998, Pingel2000}.

Fix $\xf\in\FPs(f, p)$. Local stability of $g_{\mu, p}$ at $\xf$ is 
readily computed. Let $\drv{f}{x}$ denote the total derivative 
of~$f$ at~$x$ and suppose that $\lambda_j$, $j=1, \ldots, \maxdim$.
are the eigenvalues of the linearization. The derivative of 
$g_{\mu, p}$ at~$\xf$ evaluates to
$\drv{g_{\mu, p}}{x} = \id+\mu(\drv{f_p}{x}-\id).$
Hence, stability is determined by the eigenvalues of 
$\drv{g_{\mu, p}}{\xf}$ given by
\begin{equation}\label{eq:StabilityPFC}
\kappa_j(\mu) = 1+\mu(\lambda_j-1)
\end{equation}
for $j=1, \ldots, \maxdim$. Hence, $\xf\in\FPs_g(f, p)$ iff there
exists a $\mu_0\in(-1,1)$ such that the spectral radius 
$\vrho(\drv{g_{\mu_0, p}}{\xf})=\max_{j=1, \ldots, N}\abs{\kappa_j(\mu_0)}$
is smaller than one. In particular, for a two-dimensional system 
these are the periodic orbits of saddle type\cite{Pingel2000} with
stable direction $\lambda_1\in(-1, 1)$ and $\lambda_2<-1$. Note
that optimal convergence speed is achieved for the value of~$\mu$
which corresponds to the minimal spectral radius.

\subsection{Speed Limit of Predictive Feedback Control}

For increasing instability, however, the optimal convergence speed
becomes increasingly slow \cite{Bick2012}. This applies in particular
to periodic orbits of larger periods as the periodic orbits become 
increasingly unstable on average\cite{Cvitanovi2010} and 
asymptotic convergence speed decreases. Let $\card$ denote the 
cardinality of a set. The slowdown of PFC can be
explicitly calculated by evaluating the functions
{\allowdisplaybreaks
\begin{subequations}\label{eq:StabPerG}
\begin{align}
\label{eq:StabPerGFirst}
\underline{\rho}_g(p) &= 1-\min_{\xf\in\FPs_g(f,p)}\vrho_{\text{min}}^{g}(\xf),\\
\rho_g(p) &= 1-\frac{1}{\card(\FPs_g(f,p))}\sum_{\xf\in \FPs_g(f,p)}\vrho_{\text{min}}^{g}(\xf),\\
\label{eq:StabPerGLast}
\overline{\rho}_g(p) &= 1-\max_{\xf\in\FPs_g(f,p)}\vrho_{\text{min}}^{g}(\xf),
\end{align}
\end{subequations}
}that quantify to the best, average, and worst asymptotic convergence
speed for all periodic orbits of a given period respectively. 

The slowdown becomes explicit in specific examples. We 
evaluated these functions for a map which describes the evolution of 
a two-dimensional neuromodule\cite{Pasemann2002}.
Let $l_{11}=-22, l_{12}= 5.9, l_{21}=-6.6$, and $l_{22}= 0$
and define the sigmoidal function $\s(x)=(1+\exp(-x))^{-1}$. The 
dynamics of the neuromodule is given by the map $f:\Rr^2\to\Rr^2$ 
where
\begin{multline}\label{eq:Example2D}
f(x_1, x_2) = (l_{11}\s(x_1) + l_{12}\s(x_2) - 3.4,\\ l_{21}\s(x_1) + l_{22}\s(x_2) +3.8).
\end{multline}
The values of the functions~\eqref{eq:StabPerG} are depicted in
Figure~\ref{fig:SrPFC}. One can clearly see that even the lower bound
on asymptotic convergence speed for the PFC method, corresponding to
the smallest spectral radius as determined by $1-\underline{\rho}_g$,
approaches one exponentially on average for increasing periods. This
scaling of convergence speed of PFC is quite typical;
other maps with chaotic attractors, such as the H\'enon map 
exhibit a similar behavior when subject to PFC~\cite{Bick2012}.

\begin{figure} 
\includegraphics[scale=\imagescaling]{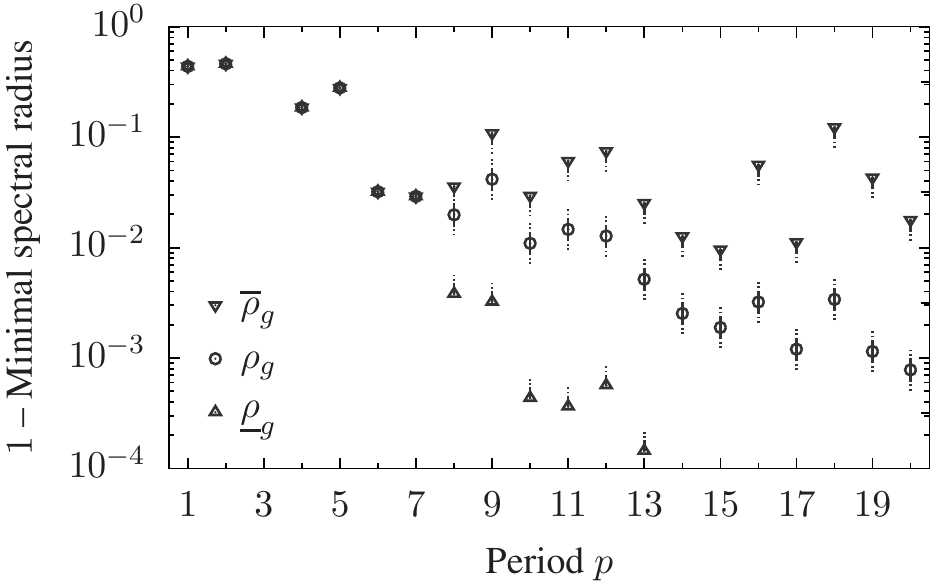}
\caption{\label{fig:SrPFC}Best, average, and worst asymptotic convergence
speed decreases as the period of periodic orbits of increases. Here, 
bounds on the spectral radius are plotted for the two-dimensional 
map~\eqref{eq:Example2D}.}
\end{figure}

\subsection{Stalled Predictive Feedback Chaos Control}

By making use of the uncontrolled dynamics, i.e., ``stalling control'',
it was recently shown that this speed limit may be overcome\cite{Bick2012}.
Stalled Predictive Feedback Control scheme is an extension of standard
Predictive Feedback Control. For a map~$\psi:\Rn\to\Rn$ define the
``zeroth iterate'' by $\ite{\psi}{0}:=\id$.

\begin{defn}\label{def:SPFC}
Suppose that the iteration of $F:\Rn\to\Rn$ defines a dynamical system.
For $M_k\in\sset{\pm\id}$ and $\mu\in\R$ let 
$S(\mu, M_k)(F)=\id+\mu M_k(F-\id)=:G_\mu$ denote the map obtained by
applying the Predictive Feedback Control transformation;
cf.~Proposition~\ref{prop:PFC}. For parameters $\prm,\prn\in\N_0=\N\cup\{0\}$
and $\mu\in\R$, the iteration of 
\begin{equation}\label{eq:DelayedStabDef}
H_{\mu}^{(\prm,\prn)} = \ite{\left(F\right)}{\prn}\circ\ite{\left(G_\mu\right)}{\prm}
\end{equation}
is referred to as Stalled Predictive Feedback Control.
\end{defn}

The function $H_{\mu}^{(\prm,\prn)}$ defined above stalls Predictive Feedback Control
in the following sense. In the PFC method, the control signal is applied
at every point in time. By iterating $H_{\mu}^{(\prm,\prn)}$ we ``stall'' the 
application of the control perturbation by adding extra evaluations of
the original, uncontrolled map~$F$.

Henceforth, we adopt the period-dependent notation introduced 
above: the uncontrolled dynamics were given by iterating 
$f:\Rn\to\Rn$ and the PFC transformed map is denoted 
by~$g_{\mu,p}$. Stalled Predictive Feedback Control is given
by the iteration of
\begin{equation}\label{eq:DelayedStab}
h_{\mu, p} = h_{\mu, p}^{(\prm,\prn)} := \ite{\left(f_p\right)}{\prn}\circ\ite{\left(g_{\mu, p}\right)}{\prm},
\end{equation}
where $\prm,\prn\in\N_0$ are parameters. By definition, we have
$h_{\mu, p}^{(0,1)} = f_{p}$ and we recover the original PFC method
for $h_{\mu, p}^{(1,0)} = g_{\mu, p}$. In general, we will omit the
superscript $(\prm,\prn)$ unless the choice is important.

\section{Stability of Stalled Predictive Feedback Chaos Control}
\label{sect:StallCtrlProp}

The stability of a periodic orbit in the controlled system depends 
on its stability properties for the uncontrolled dynamics. In this
section we derive criteria for a periodic orbit to be stabilizable
for Stalled Predictive Feedback Control.

\subsection{Local stability of periodic orbits for $h_{\mu, p}$}
\label{sec:LocStability}

The local stability properties of $h_{\mu, p}$ can be calculated
from $f_p$ and $g_{\mu, p}$.
By definition we have $\FP(f_p)\subset\FP(h_{\mu, p})$. Suppose
that
$\xf\in\FPs(f, p)$ and the eigenvalues of~$\drv{f_p}{\xf}$ are given
by $\lambda_j$ where $j=1, \dotsc, \maxdim$. Note that the eigenvectors
of~$\drv{g_{\mu, p}}{\xf}$ and~$\drv{f_p}{\xf}$ are the same.
Hence, the local stability properties of $h_{\mu, p}$ are
readily computed from the $\lambda_j$ and the local stability
properties of the PFC transformed map $g_{\mu, p}$ as given 
by~\eqref{eq:StabilityPFC}. The eigenvalues of the Jacobian of
$h_{\mu, p}$ at $\xf$ evaluate to
\[\Lambda_j = \lambda_j^{\prn}\kappa_j(\mu)^{\prm}=
\lambda_j^{\prn}\left(1+\mu(\lambda_j-1)\right)^{\prm}\]
for $j=1, \dotsc, \maxdim$. Hence, local stability at $\xf$ is
given by the spectral radius
\[\sr(\drv{h_{\mu, p}}{\xf}) = \max_{j=1, \dotsc, \maxdim}\abs{\Lambda_j}.\]
If all eigenvalues are of modulus smaller than one, the fixed 
point~$\xf$ is stable for~$h_{\mu, p}$. 
In other words, a periodic orbit $\xf\in\FPs(f, p)$ is called
\emph{SPFC-stabilizable} if there are parameters $\prm,\prn\in\N_0$
and $\mu\in(-1,1)$ such that
\[\sr\big(\drv{h_{\mu, p}^{(\prm,\prn)}}{\xf}\big)<1.\]

Let $\FPs_h(f, p)$ denote the set of SPFC-stabilizable periodic
orbits and, clearly, $\FPs_g(f, p)\subset\FPs_h(f, p),$ that is,
every PFC-stabilizable periodic orbit is also SPFC-stabilizable.

To compare the ``performance'' of Stalled Predictive Feedback
Control with that of original Predictive Feedback Control we have
to rescale the stability properties. Since~$h_{\mu, p}^{(m,n)}$
contains $n+m$ evaluations of~$f_p$ we take the $(m+n)$th root
to obtain functions
\[\alts{l}_j(\prm,\prn,\mu) = \abs{\lambda_j^{\prn}\left(1+\mu(\lambda_j-1)\right)^{\prm}}^{\frac{1}{\prm+\prn}},\]
where $j=1, \dotsc, \maxdim$. With the parameter
$\alpha=\frac{\prn}{\prm+\prn}$
we thus obtain an equivalent set of functions
\begin{equation}\label{eq:SingStab}
l_j(\alpha, \mu) = \abs{\lambda_j}^\alpha\abs{\left(1+\mu(\lambda_j-1)\right)}^{1-\alpha}
\end{equation}
for $j=1, \dotsc, \maxdim$ which determine the local stability
properties of $h_{\mu, p}$ rescaled to a single evaluation of~$f_p$.
Conversely, for any rational
$\alpha\in [0, 1]\cap\Q$
we obtain a pair $(\prm,\prn)$. In the following, we refer to 
both~$\alpha$ and the pair $\prm,\prn$ as \emph{stalling parameters}, 
depending what is convenient in the context. When using the stalling 
parameter~$\alpha$, we may also write $h_{\mu, p}^\alpha$.

Rescaled local stability of Stalled Predictive Feedback Control
for a given periodic orbit $\xf\in\FPs(f,p)$ of period $p$ is hence
determined by the \emph{stability function}
\begin{equation}\label{eq:StabilityFunction}
\vrho_\xf(\alpha, \mu) = \max_{j=1,\dotsc, \maxdim}l_j(\alpha, \mu).
\end{equation}
In comparison to the original Predictive Feedback Control, Stalled
Predictive Feedback Control depends on two parameters: the control
parameter~$\mu$ and the stalling parameter~$\alpha$.

\subsection{Conditions for stabilizability}
\label{sec:Stabilizability}

To derive conditions for SPFC-stabilizability, consider some general
properties of functions of type~\eqref{eq:SingStab}.
Fix $w\in\Cs := \pn{\C}{0}$. Let $\So:= \set{z\in\C}{\abs{z}=1} \cong 
\Rr/2\pi\Z$ denote the unit
circle. We will choose a realization to describe elements of $\So$
depending
on what is convenient in the context. Consider the function 
$L_w:\R^2\to\R$ given by
\[L_w(\alpha, \mu) := \abs{w}^\alpha\abs{1+\mu(w-1)}^{1-\alpha}.\]
By definition, we have $L_w(0, 0) = 1$ and in a sufficiently small
open ball $V$ around $(0, 0)$ the function $L_w$ is differentiable
and the derivative is bounded away from zero. Hence, in this
ball the curve defined by
\[V_0 := \set{(\alpha, \mu)\in V}{L_w(\alpha, \mu)=1}\]
is a one-dimensional submanifold of $\Rr^2$. If $V$ is chosen small
enough, it may be written as a disjoint union 
\[V = V_0 \cup V_+ \cup V_-\]
where $V_+ = \set{(\alpha, \mu)\in V}{L_w(\alpha, \mu) > 1}$ and
$V_- = \set{(\alpha, \mu)\in V}{L_w(\alpha, \mu) < 1}$.

The goal is to get a linearized description close to the origin. 
Let~$\grad$ denote the gradient and $\langle\,\cdot\,,\cdot\,\rangle$
the usual Euclidean scalar product. Define the line
\begin{equation}\label{eq:Halfplane}
\gamma(w) = \set{x\in\Rr^2}{\langle\grad(L_w)|_{(0, 0)}, x\rangle = 0}
\end{equation}
which is tangent to $V_0$ at the origin. Let 
\[\Hp:=\tset{x\in\Rr^2}{\langle\grad(L_w)|_{(0, 0)}, x\rangle < 0}\]
denote
one of the half planes defined by the line~$\gamma(w)$. 
Moreover, the sets $Q_j := \big(\frac{(j-1)\pi}{2}, \frac{j\pi}{2}\big)$
for $j\in\sset{1, 2, 3, 4}$
denote the open segments of $\So$ that lie in one of the four quadrants
of~$\Rr^2$.

\begin{defn}\label{defn:StabTuple}
Suppose that $w\in\Cs$. The connected subset $C_w := \Hp\cap\So$ is
called the domain of stability of~$w$. For a tuple 
$\tilde w=(w_1, \dotsc, w_\maxdim)\in(\Cs)^\maxdim$ define the
domain of stability to be
\begin{equation}\label{eq:SomOfStab}
C_{\tilde w} := \bigcap_{k=1}^{\maxdim}C_{w_k}.
\end{equation}
If $C_{\tilde w}\cap
\overline{\left(Q_1\cup Q_4\right)}\neq\emptyset$ then the tuple 
$\tilde w$ is called {stabilizable}.
\end{defn}

In a sufficiently small neighborhood $U\subset V$ of the origin, the 
``linearized'' version of~$V_-$ is given by the set $\Hp\cap U$.

\begin{lem}\label{lem:TupleStab}
If the domain of stablity $C_w$ of a tuple 
$w=(w_1, \dotsc, w_\maxdim)\in(\Cs)^\maxdim$ is nonempty then there
exist $(\mu_0, \alpha_0)$ such that $L_{w_j}(\mu_0, \alpha_0) < 1$
for all $j=1,\dotsc,\maxdim$. If the tuple $w$ is stabilizable then
then $\alpha_0$ may be chosen such that $\alpha_0\geq 0$.
\end{lem}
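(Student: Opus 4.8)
The plan is to reduce the multi-dimensional statement to a collection of one-dimensional (in fact, single-factor) statements and then glue them together using the convexity-like geometry of the half-planes $\Hp$. First I would record the infinitesimal picture for a single $w_j$: since $L_{w_j}(0,0)=1$ and $L_{w_j}$ is differentiable with nonzero gradient at the origin (as stated in the construction of $V_0$, $V_+$, $V_-$), the set $V_-$ coincides, up to higher order terms, with the open half-plane $\Hp_{w_j}=\{x\in\Rr^2 : \langle\grad(L_{w_j})|_{(0,0)},x\rangle<0\}$ near the origin. Precisely, I would show there is a small ball $U_j$ around $(0,0)$ such that $\Hp_{w_j}\cap U_j\subset V_-^{(j)}$; this is the content of the remark immediately preceding the lemma, and it follows from a first-order Taylor expansion of $L_{w_j}$ together with the fact that along any ray into $\Hp_{w_j}$ the directional derivative of $L_{w_j}$ is strictly negative, so $L_{w_j}<1$ on a whole cone-shaped neighborhood.

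Next I would intersect. Set $U=\bigcap_{j=1}^{\maxdim}U_j$, still an open ball around the origin. The hypothesis that $C_w=\bigcap_j C_{w_j}=\bigl(\bigcap_j\Hp_{w_j}\bigr)\cap\So$ is nonempty means the open cone $\bigcap_j\Hp_{w_j}$ is nonempty (each $\Hp_{w_j}$ is an open half-plane through the origin, so a finite intersection is either empty or an open cone with nonempty interior, and $C_w\neq\emptyset$ rules out the degenerate cases where it collapses to $\{0\}$ or a single ray). Pick any point $x$ in this open cone; then for $t>0$ small enough, $tx\in U$ and $tx\in\Hp_{w_j}$ for every $j$, hence $L_{w_j}(tx)<1$ simultaneously for all $j$. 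Writing $tx=(\mu_0,\alpha_0)$ gives the first assertion. The only mild care needed is to make sure the ``small enough $t$'' can be taken uniformly in $j$, which is immediate because there are finitely many factors.

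For the second assertion, stabilizability says $C_w\cap\overline{Q_1\cup Q_4}\neq\emptyset$, i.e. the cone $\bigcap_j\Hp_{w_j}$ meets the closed right half-plane $\{\alpha\ge 0\}$ (since $\overline{Q_1\cup Q_4}$ is exactly the closed right half of the circle, corresponding to $\alpha\ge 0$). A point of $C_w$ in $\overline{Q_1\cup Q_4}$ need not itself lie in the open cone, so I would argue as follows: the open cone $\bigcap_j\Hp_{w_j}$ is a nonempty open set whose closure meets the closed half-plane $\{\alpha\ge0\}$; because the half-plane is convex and the cone is open and ``fans out'' from the origin, I can find a genuinely interior direction $x$ of the cone with $\alpha$-coordinate $\ge 0$ — concretely, perturb the boundary point of $C_w\cap\overline{Q_1\cup Q_4}$ slightly into the interior of the cone; generically such a perturbation can be chosen not to decrease the (already nonnegative) $\alpha$-coordinate, or at worst to keep it $\ge 0$ up to an arbitrarily small error, and then rescale. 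Taking $(\mu_0,\alpha_0)=tx$ for small $t>0$ as before yields $\alpha_0\ge 0$ together with $L_{w_j}(\mu_0,\alpha_0)<1$ for all $j$.

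The main obstacle I anticipate is the boundary subtlety in the last step: the definition of stabilizability only guarantees a point in the \emph{closure} of the relevant quadrants, whereas to conclude strict inequality $L_{w_j}<1$ I need a point in the \emph{open} cone, and I must simultaneously keep $\alpha_0\ge0$. Handling this cleanly requires a small convexity/openness argument — essentially, that an open convex cone meeting a closed half-space through the apex must contain interior points arbitrarily close to that half-space's boundary with the sign of the relevant coordinate preserved. Everything else (the Taylor estimate, the finite intersection, the rescaling) is routine.
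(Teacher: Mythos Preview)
Your proposal is correct and follows essentially the same route as the paper: both arguments exploit that near $(0,0)$ the sublevel set $\{L_{w_j}<1\}$ is approximated by the open half-plane $\Hp_{w_j}$, and then use that the nonemptiness of $C_w$ forces the finitely many half-planes to share a common sector near the origin. The only cosmetic difference is that the paper packages the first-order approximation as ``for each $w_j$ there is an open ball $B_{w_j}\subset V_-\cap\Hp_{w_j}$ tangent to the origin'' and intersects these balls directly, whereas you pick a ray in the common cone and rescale; your treatment of the $\alpha_0\ge 0$ boundary case is in fact more careful than the paper's one-line assertion, and the worry you flag dissolves once you note that $C_w$ is an \emph{open} arc of $\So$, so meeting $\overline{Q_1\cup Q_4}$ forces it to contain interior points with $\alpha>0$.
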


\begin{proof}
Suppose that $V_-$ and $\Hp$ are defined as above. Because of
continuity, for every $w\in\Cs$ there exists an open ball 
$B_w\subset V_-\cap\Hp$ that is tangent to the origin. If a
tuple $\tilde w = (w_1, \dotsc, w_\maxdim)$ has nonempty domain of
stability $C_{\tilde w}$ then 
\[B:=\bigcap_{j=1}^\maxdim B_{w_j}\neq \emptyset.\]
By construction, any $(\mu_0, \alpha_0)\in B$ has the desired property.

If in addition $w$ is stabilizable then the intersection 
$B\cap \set{(x,y)\in\R^2}{x\geq 0}$ is not empty. This proves
the second assertion.
\end{proof}

The domain of stability is determined by the gradient
of $L_w$ at the origin. Let $\ln$ denote the (real) natural logarithm.
We have $\grad (L_w)|_{(0, 0)} = (\ln{\abs{w}}, \Re(w)-1)$. Define 
\begin{align*}
R_1 &:= \set{z\in \C}{\Re(z) > 1},\\
R_2 &:=\set{z\in \C}{\abs{z} < 1},\\
R_3&:=\set{z\in \C}{\abs{z} > 1\text{, }\Re(z) < 1}.
\end{align*}
These regions are
sketched in Figure~\ref{fig:Stabilizability}(a). If $w\in R_1$ then
$\norm{\grad(L_w)|_{(0, 0)}}^{-1}\cdot\grad(L_w)|_{(0, 0)}\in Q_1$ and
therefore $Q_3\subset C_w$. Similarly, if $w\in R_2$ then 
$Q_1\subset C_w$ and if $w\in R_3$ then $Q_2\subset C_w$ 
(Figure~\ref{fig:Stabilizability}(b)--(d)). For $w$ on the boundary of
the $R_k$ the gradient lies on one of the coordinate axes and we obtain
similar conditions.

\begin{figure*}[t] 
\subfigure[\ Complex plane]{\includegraphics[scale=\imagescaling]{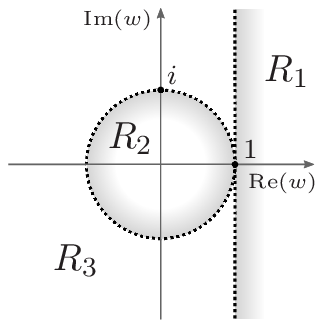}}\qquad\qquad
\subfigure[\ $w\in R_1$]{\includegraphics[scale=\imagescaling]{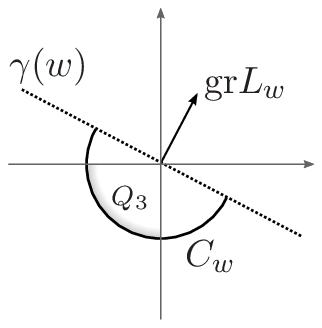}}\qquad
\subfigure[\ $w\in R_2$]{\includegraphics[scale=\imagescaling]{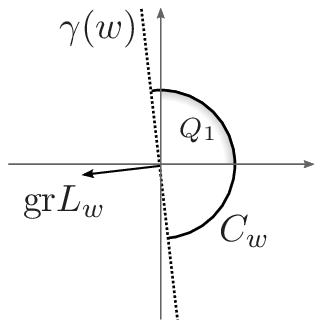}}\qquad
\subfigure[\ $w\in R_3$]{\includegraphics[scale=\imagescaling]{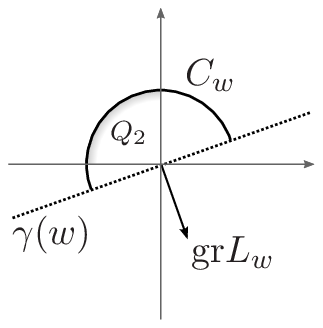}}
\caption{\label{fig:Stabilizability} Stabilizability regions for $w\in\C$ are
shown in Panel~(a) and the corresponding domains of stability 
$C_w$, as given by~\eqref{eq:SomOfStab}, for the three cases in Panels~(b)--(d). Here,
$\text{gr}L_w = \grad (L_w)|_{(0, 0)}$.}
\end{figure*}

These observations have implications for stabilizability for a
tuple $(w_1, \dotsc, w_\maxdim)$: if for any fixed $k\in\sset{1,2,3}$
all $w_j\in R_k$ for $j=1, \dotsc, \maxdim$ then the tuple is
stabilizable. Furthermore, if either $w_j\in R_1 \cup R_{3}$ or
$w_j\in R_2 \cup R_{3}$ for all $j=1,\dotsc,\maxdim$ then the tuple
is stabilizable. For any other combination the condition of
stabilizability is more difficult; in two dimensions linear
dependence of the
gradients tells us for $(w_1, w_2)$ with $w_1 \in R_1$ and
$w_2 \in R_2$ the tuple is stabilizable iff
\begin{equation}\label{eq:StabCond}
\ln(\abs{w_2})\Re(w_1) \neq \ln(\abs{w_1})\Re(w_2).
\end{equation}
Note that this condition is satisfied for a set of full Lebesgue
measure. 

\begin{rem}\label{rem:Conjugate}
Note that stabilizability is not affected by taking the complex
conjugate. Hence, stabilizability of a tuple of nonzero complex 
numbers as defined in Definition~\ref{defn:StabTuple} does not 
change when an entry of the tuple is replaced by its complex 
conjugate.
\end{rem}

For $w=0$ the function $L_0$ has a discontinuity at $\alpha = 0$.
In case $\alpha>0$ we have $L_0(\alpha, \mu)=0$ and for $\alpha=0$
and $\mu\in(-1,1)$ we have $L_0(0, \mu) = 1-\mu$. Therefore,
define the domain of stabilizability of zero to be 
$C_0=\{0, \frac{\pi}{2}\}\cup Q_1\cup Q_4$. For~\mbox{$\alpha>0$}, stabilizability
of a tuple with one component equal to zero may be reduced to
stabilizability of the ``reduced'' tuple where the zero entry is
omitted.

With the notation as above, we are now able to relate these general results
to the local stability properties of a
given periodic orbit.

\begin{defn} Suppose that $\xf\in\FPs(f, p)$ is a periodic orbit 
of~$f$ and suppose that the eigenvalues of $\drv{f_p}{\xf}$ are
given by~$\lambda_j$ with
$j=1,\dotsc,\maxdim$. The periodic orbit is called {locally
stabilizable} if the tuple $\lambda = (\lambda_1, \dotsc, 
\lambda_{\maxdim})$ is stabilizable as a tuple, as defined in 
Definition~\ref{defn:StabTuple}.
\end{defn}

This definition links the notion of stabilizability of a tuple
defined above and the local dynamics close to a periodic orbit.
Recall the notion of uniform hyperbolicity~\cite{Katok1995}.
Suppose that a differentiable function~$f$ defines a discrete time
dynamical system on~$\Rn$. We call an $f$-invariant set $A\subset\Rn$
hyperbolic if for every $x\in A$ no eigenvalue of~$\drv{f}{x}$ is
of absolute value one.

\begin{prop}
Suppose that the chaotic map $f:\Rn\to\Rn$ gives rise to a hyperbolic
attractor and for $\xf\in\FPs(f, p)$ let $\lambda =
(\lambda_1, \dotsc, \lambda_\maxdim)$ denote the eigenvalues of
$\drv{f_p}{\xf}$. If~$\xf$ is locally stabilizable then~$\xf$ is
SPFC-stabilizable. Moreover, if the domain of stability $C_\lambda$
satisfies
\[\left\{\frac{\pi}{2}, \frac{3\pi}{2}\right\}\cap C_\lambda\neq\emptyset\]
then $\xf$ is PFC-stabilizable.
\end{prop}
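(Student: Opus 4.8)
The plan is to reduce the proposition to Lemma~\ref{lem:TupleStab} by unwinding the two definitions involved. First I would observe that local stabilizability of $\xf$ means, by definition, that the eigenvalue tuple $\lambda=(\lambda_1,\dotsc,\lambda_\maxdim)$ of $\drv{f_p}{\xf}$ is stabilizable in the sense of Definition~\ref{defn:StabTuple}; in particular $C_\lambda\neq\emptyset$, and $C_\lambda\cap\overline{Q_1\cup Q_4}\neq\emptyset$. Hyperbolicity of the attractor guarantees that no $\lambda_j$ lies on the unit circle, and since $\xf\in\FPs(f,p)$ the derivative $\drv{f_p}{\xf}-\id$ is also nonsingular, so $\lambda_j\neq 1$; hence every $\lambda_j\in\Cs$ and indeed each $\lambda_j$ falls in one of the regions $R_1,R_2,R_3$ (or their boundaries, excluding the forbidden points on $\So$), so the machinery of the previous subsection applies without degeneracy.

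Next I would apply Lemma~\ref{lem:TupleStab} to the tuple $\lambda$: since $C_\lambda\neq\emptyset$ there exist $(\mu_0,\alpha_0)$ with $L_{\lambda_j}(\mu_0,\alpha_0)<1$ for all $j$, and since $\lambda$ is stabilizable we may take $\alpha_0\geq 0$. Comparing $L_w$ with the rescaled single-step stability functions, note that $L_{\lambda_j}(\alpha,\mu)=l_j(\alpha,\mu)$ from~\eqref{eq:SingStab}, so $\vrho_\xf(\alpha_0,\mu_0)=\max_j l_j(\alpha_0,\mu_0)<1$. The remaining issue is passing from this continuous, possibly irrational pair $(\alpha_0,\mu_0)$ with $\alpha_0\in[0,1]$ back to admissible discrete stalling parameters: by continuity of $\vrho_\xf$ there is an open neighborhood of $(\alpha_0,\mu_0)$ in which the strict inequality persists, so I can pick a rational $\alpha\in[0,1]\cap\Q$ nearby — which by the discussion after~\eqref{eq:StabilityFunction} corresponds to a genuine pair $(\prm,\prn)\in\N_0^2$ — together with a $\mu\in(-1,1)$ nearby (one should check $\mu_0$ can be taken in $(-1,1)$, which follows since $L_w$ was only analyzed in a small ball $V$ around the origin where $\mu$ is automatically small). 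Then $\sr\big(\drv{h_{\mu,p}^{(\prm,\prn)}}{\xf}\big)=\vrho_\xf(\alpha,\mu)^{?}<1$ after the $(\prm+\prn)$-th power, giving SPFC-stabilizability.

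For the second assertion, PFC corresponds to $\alpha=0$ (i.e.\ $\prn=0$), so I would show that $\{\tfrac{\pi}{2},\tfrac{3\pi}{2}\}\cap C_\lambda\neq\emptyset$ lets one choose the stabilizing parameter on the $\mu$-axis $\{\alpha=0\}$. The points $\tfrac{\pi}{2}$ and $\tfrac{3\pi}{2}$ of $\So$ are precisely the two directions along the $\mu$-axis in $\Rr^2$; if one of them lies in the domain of stability $C_\lambda=\bigcap_k C_{\lambda_k}$, then the corresponding ray into $\Hp$ for every $\lambda_k$ contains, for small $\mu\neq 0$, a point of $V_-$ for each coordinate simultaneously — here I would reuse the tangent-ball construction from the proof of Lemma~\ref{lem:TupleStab}, intersecting the balls $B_{\lambda_k}$ with the $\mu$-axis rather than just the half-plane $\{x\ge 0\}$. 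At that point $(\alpha,\mu)=(0,\mu)$ gives $l_j(0,\mu)=|1+\mu(\lambda_j-1)|=|\kappa_j(\mu)|<1$ for all $j$, which is exactly the PFC-stabilizability criterion from~\eqref{eq:StabilityPFC}.

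The main obstacle I anticipate is bookkeeping the rationality/discreteness passage cleanly: Lemma~\ref{lem:TupleStab} and the $L_w$ analysis live in a continuous parameter space, and one must argue carefully that the open set on which $\vrho_\xf<1$ still meets the (dense but restricted) set of pairs $(\alpha,\mu)$ with $\alpha\in[0,1]\cap\Q$ and $\mu\in(-1,1)$, and that this continuous-to-discrete step is compatible with the $\overline{Q_1\cup Q_4}$ / $\{\tfrac{\pi}{2},\tfrac{3\pi}{2}\}$ constraints that distinguish SPFC from PFC. Everything else is essentially a dictionary translation between the geometric picture of Figure~\ref{fig:Stabilizability} and the stability functions $l_j$.
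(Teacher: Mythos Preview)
Your proposal is correct and follows essentially the same route as the paper: unwind the definition of local stabilizability to obtain a stabilizable tuple $\lambda$, apply Lemma~\ref{lem:TupleStab} to produce $(\alpha_0,\mu_0)$ with $\alpha_0\ge 0$, identify $L_{\lambda_j}=l_j$ so that $\vrho_\xf(\alpha_0,\mu_0)<1$, and for the second assertion observe that $\tfrac{\pi}{2},\tfrac{3\pi}{2}\in C_\lambda$ correspond to the $\mu$-axis, i.e.\ $\alpha=0$, which is PFC. You are in fact more careful than the paper's own proof in flagging the passage from a continuous $\alpha_0$ to a rational $\alpha\in[0,1]\cap\Q$ (and hence to an actual pair $(\prm,\prn)$), as well as the role of hyperbolicity in ensuring $\lambda_j\in\Cs$; the paper's argument simply writes $h_{\mu_0,p}^{\alpha_0}$ and leaves these points implicit.
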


\begin{proof}
If a periodic orbit $\xf$ is locally stabilizable, then tuple~$\lambda$
is stabilizable. Thus,
according to Lemma~\ref{lem:TupleStab}, there are parameters 
$(\alpha_0, \mu_0)$ such that $L_{\lambda_j}(\alpha_0, \mu_0) < 1$
for all $j=1,\dotsc,\maxdim$ simultaneously. Recall that local
stability of  $h_{\mu_0, p}^{\alpha_0}$ at $\xf$ is given by
$l_j(\alpha, \mu)=L_{\lambda_j}(\alpha, \mu)$ according to
Equation~\eqref{eq:SingStab}. Therefore, local stability of
a periodic orbit is equivalent to the existence of parameters
$(\alpha_0, \mu_0)$ with $\alpha_0\geq 0$ and
\[\sr(\drv{h_{\mu_0, p}^{\alpha_0}}{\xf})<1.\]
which proves the first statement.

If $\tsset{\frac{\pi}{2}, \frac{3\pi}{2}}\cap C_\lambda\neq\emptyset$
then there exists a parameter $\mu_0$ such that
$\sr(\drv{h_{\mu_0, p}^{0}}{\xf})<1$. Since
Stalled Predictive Feedback Control reduces to classical Predictive
Feedback Control for a stalling parameter of $\alpha=0$, the claim
follows.
\end{proof}

The conditions derived for stabilizability of tuples translate
directly into conditions on the local stability properties of
a periodic orbit. For dynamics in two dimensions we obtain the
following immediate consequence.

\begin{cor}
Suppose that $f:\Rr^2\to\Rr^2$ is a chaotic map where all periodic
orbits $\xf\in\Per(f)$ are of saddle type with eigenvalues 
$\lambda_1, \lambda_2$ that satisfy condition~\eqref{eq:StabCond}, 
i.e., we have
\[\ln(\abs{\lambda_2})\Re(\lambda_1) \neq \ln(\abs{\lambda_1})\Re(\lambda_2).\]
Then all periodic orbits $\xf\in\Per(f)$ are SPFC-stabilizable.
\end{cor}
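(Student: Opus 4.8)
The plan is to reduce the corollary to the preceding proposition together with the two-dimensional stabilizability criterion~\eqref{eq:StabCond}. First I would observe that for a saddle in~$\Rr^2$ the two eigenvalues $\lambda_1,\lambda_2$ are real (a saddle has one contracting and one expanding real direction), with $\lambda_1\in(-1,1)$ and $\abs{\lambda_2}>1$. Then I would locate each eigenvalue among the regions $R_1,R_2,R_3$ introduced before Figure~\ref{fig:Stabilizability}: since $\abs{\lambda_1}<1$ we have $\lambda_1\in R_2$, and since $\abs{\lambda_2}>1$ the eigenvalue $\lambda_2$ lies in $R_1$ when $\lambda_2>1$ and in $R_3$ when $\lambda_2<-1$ (the case $\lambda_2=1$ is excluded because $\drv{f_p}{\xf}-\id$ is nonsingular on $\FPs$, and $\lambda_2=-1$ is excluded by hyperbolicity if one assumes it, or simply is not a saddle eigenvalue).

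Next I would dispatch the two cases. If $\lambda_2\in R_3$, then by the discussion following Lemma~\ref{lem:TupleStab} a tuple with one entry in $R_2$ and one in $R_3$ is automatically stabilizable (the relevant domains of stability are $Q_1$ for $R_2$ and $Q_2$ for $R_3$, whose closures meet $\overline{Q_1\cup Q_4}$); hence $\xf$ is locally stabilizable. If $\lambda_2\in R_1$, then we are in the mixed situation $w_1\in R_2$, $w_2\in R_1$, and the excerpt states that the tuple is stabilizable precisely when~\eqref{eq:StabCond} holds — which is exactly the hypothesis. In either case $\lambda=(\lambda_1,\lambda_2)$ is a stabilizable tuple, so $\xf$ is locally stabilizable in the sense of the definition preceding the proposition.

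Finally I would invoke the proposition: a locally stabilizable periodic orbit of a chaotic map giving rise to a hyperbolic attractor is SPFC-stabilizable. (One should note that the hypothesis ``all periodic orbits are of saddle type'' already gives the needed hyperbolicity on $\Per(f)$, which is what the proposition actually uses; alternatively one applies Lemma~\ref{lem:TupleStab} directly to the tuple $\lambda$ to produce parameters $(\alpha_0,\mu_0)$ with $\alpha_0\ge 0$ and $\sr(\drv{h_{\mu_0,p}^{\alpha_0}}{\xf})<1$, bypassing any global hyperbolicity assumption.) Since this argument is uniform over all periodic points and all periods $p$, every $\xf\in\Per(f)$ is SPFC-stabilizable.

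The only genuine subtlety — and the step I would be most careful about — is the case split on the sign of $\lambda_2$ and confirming that $\lambda_2>1$ is the \emph{only} case where condition~\eqref{eq:StabCond} is actually needed; when $\lambda_2<-1$ the tuple is stabilizable unconditionally, so~\eqref{eq:StabCond} is used merely as a convenient sufficient hypothesis covering both cases at once. I would also double-check that membership $\lambda_1\in R_2$ is strict (i.e. $\abs{\lambda_1}\ne 1$), which again follows from the saddle assumption, so that $\lambda_1$ does not land on the boundary where the gradient degenerates. Everything else is a direct citation of the results already established in the excerpt.
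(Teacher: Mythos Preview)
Your proposal is correct and follows precisely the route the paper intends: the corollary is stated as an ``immediate consequence'' with no separate proof, and the immediate reasoning is exactly your case split (\,$\lambda_1\in R_2$ always; $\lambda_2\in R_3$ gives stabilizability via the $R_2\cup R_3$ clause, while $\lambda_2\in R_1$ invokes condition~\eqref{eq:StabCond}\,), followed by the preceding proposition. Your remark that the hyperbolicity hypothesis of the proposition is supplied by the saddle assumption, or alternatively that one can appeal to Lemma~\ref{lem:TupleStab} directly, is also apt.
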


Note that the number of constraints for stabilizability grows 
with increasing dimension of the dynamical system. In order to 
determine the absolute number of periodic orbits which are 
stabilizable for higher dimensional
systems, a more detailed knowledge about the ``average'' local stability
properties of periodic orbits is needed.

Since the system is real, complex eigenvalues of the derivative will
always come in complex conjugate pairs. According to
Remark~\ref{rem:Conjugate} above, this actually results in an
effective decrease of the number of constraints.

\subsection{A Geometric Interpretation}

\begin{figure*} 
\subfigure[]{\includegraphics[scale=\imagescaling]{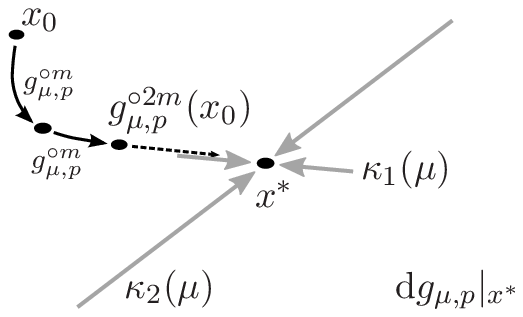}}\qquad\qquad
\subfigure[]{\includegraphics[scale=\imagescaling]{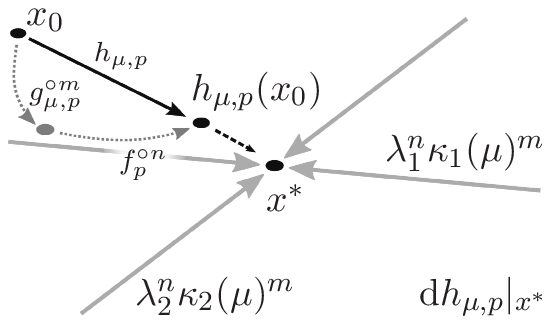}}
\caption{\label{fig:geometry}Why stalling chaos control
speed up convergence. Iteration of $g_{\mu,p}$ takes a trajectory
to the periodic orbit~$\xf$ slowly along the direction of the
originally stable manifold (Panel~(a)). Stalling control accelerates 
convergence by taking advantage of the fast convergence speed along the
stable manifold (Panel~(b)) leading to fast overall convergence speed.
The length of the
gray arrows illustrate convergence speed as they scale inversely with
the corresponding value of the eigenvalue.}
\end{figure*}

The local stability considerations also explain why Stalled
Predictive Feedback Control increase asymptotic convergence
speed\cite{Bick2012}. Consider a periodic orbit~$\xf$ of
saddle type in a two-dimensional system where contraction
along the stable direction is given by $\lambda_1\in(-1,1)$ 
and expansion along the unstable manifold by $\lambda_2<-1$.
As discussed above, these are the PFC-stabilizable periodic
orbits. Suppose that $\mu_\text{opt}>0$ is the value
of the control parameter for which the spectral radius of the
linearization of the PFC-transformed map $g_{\mu, p}$ takes its
minimum. For $\lambda_2\ll -1$ we have $\mu_\text{opt}\approx 0$
and therefore $\kappa_1(\mu_\text{opt})\approx 1$ determines
the asymptotic convergence speed of the dominating direction if
the periodic orbit is
stabilized. Therefore the trajectory will approach the periodic
orbit along the direction corresponding to~$\lambda_1$; 
cf.~Figure~\ref{fig:geometry}. The slowdown of Predictive Feedback
Control is caused by the fact that for highly unstable periodic
orbits, the trajectories converge to the originally stable
manifold along which convergence is slow in the transformed
system.

Stalling PFC exploits exactly this property. First, iteration of 
$g_{\mu, p}$ takes the trajectory closer to the stable manifold.
Second, iteration of $f_p$ leads to fast convergence along the stable
manifold while diverging from the stable manifold; 
cf.~Figure~\ref{fig:geometry}. Thus, asymptotic convergence speed
of $h_{\mu, p}$ is increased by making use of the (increasing)
stability of the stable direction. For given stalling 
parameters~$\prm,\prn$ the optimal value of the control
parameter $\mu$ is close to the zero of~$\kappa_2(\mu)$. For this
value, convergence to the stable direction is strongest, taking
full advantage of the fast convergence given by~$\lambda_{1}$ along
the stable manifold of the chaotic map~$f$. The question of how
to chose the stalling parameters~$\prm,\prn$ will be addressed 
in the following sections.

\section{Convergence Speed for Chaotic Maps}\label{sec:Example}

In the previous section we analyzed the stability properties of
the SPFC method for periodic orbits in dependence of their stability
properties. The improvements due to stalling can be 
calculated explicitly for some ``typical'' two and three-dimensional
chaotic maps.

With $\sr_{\text{min}}^{h}(\xf) = \inf_{\mu, \alpha}\sr_{\xf}(\alpha, \mu)$
denoting the rescaled stability of the linearization for the
optimal parameter values, we calculated the functions
{\allowdisplaybreaks
\begin{subequations}\label{eq:StabPerH}
\begin{align}
\label{eq:StabPerHFirst}
\overline{\rho}_h(p) &= 1-\min_{\xf\in\FPs_h(f,p)}\sr_{\text{min}}^{h}(\xf),\\
\rho_h(p) &= 1-\frac{1}{\card(\FPs_h(f,p))}\sum_{\xf\in\FPs_h(f,p)}\sr_{\text{min}}^{h}(\xf),\\
\label{eq:StabPerHLast}
\underline{\rho}_h(p) &= 1-\max_{\xf\in\FPs_h(f,p)}\sr_{\text{min}}^{h}(\xf)\end{align}
\end{subequations}
}numerically in the same fashion as \eqref{eq:StabPerG} to asses the
scaling of optimal asymptotic convergence speed of Stalled Predictive
Feedback Control for a given chaotic map
across different periods. That is, for every periodic orbit of $f$
of minimal period $p$ we calculated the spectral radius at the optimal
parameter values and then took the minimum, maximum, and mean of these
values. In particular, $1-\underline{\rho}_h$ is the upper limit and
$1-\overline{\rho}_h$ is the lower limit for the best asymptotic
convergence speed of all SPFC-stabilizable periodic orbits of a
given period~$p$ rescaled to one evaluation of $f_p$.

The increase of the number of stabilizable orbits for PFC and SPFC
can be quantified by looking at the fractions of stabilizable 
periodic orbits that are given by
\begin{equation}\label{eq:Stabilizability}
\nu_h(p) = \frac{\card(\FPs_h(f,p))}{\card(\FP(f,p))}\text{ and }
\nu_g(p) =\frac{\card(\FPs_g(f,p))}{\card(\FP(f,p))},
\end{equation}
respectively.

\subsection{Stabilizability for chaotic maps}

\begin{figure*} 
\subfigure[]{\includegraphics[scale=\imagescaling]{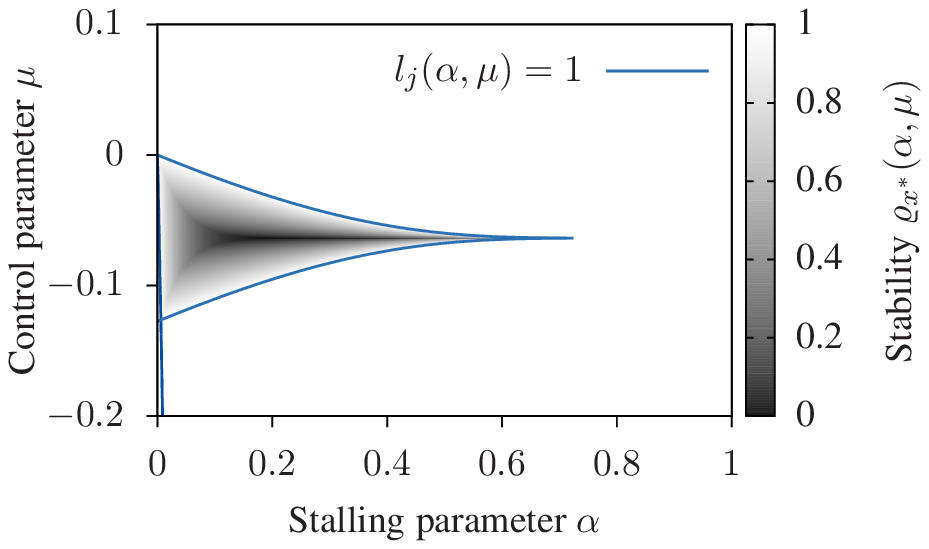}}\qquad\qquad
\subfigure[]{\includegraphics[scale=\imagescaling]{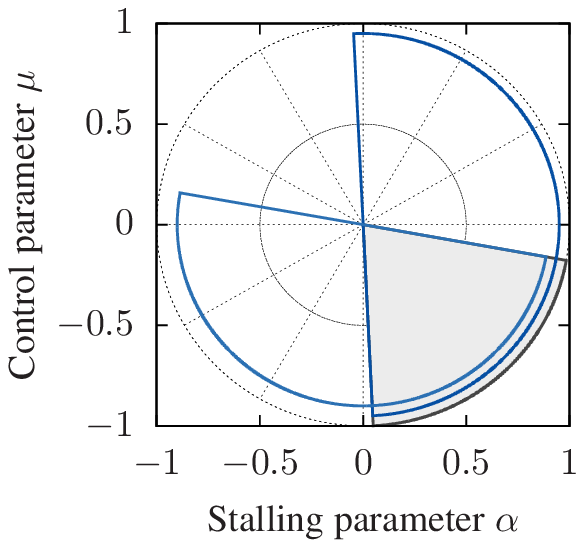}}
\caption{\label{fig:2DSingle}\CO{}Stability analysis for a periodic orbit of
period $p=5$ of the map \eqref{eq:Example2D} with local stability given
by $\lambda=(\lambda_1, \lambda_2) = (1.46\cdot10^{-9}, 16.698)$ yields
a region in parameter space in which it is stable. Panel~(a) shows the
stability function \eqref{eq:StabilityFunction} and the lines defined by
$l_j(\alpha, \mu) = 1$ with $l_j$ as given by~\eqref{eq:SingStab}. The
domain of stability $C_\lambda$ around $(\alpha, \mu)=0$ is depicted in
Panel~(b). Note that this periodic orbit cannot be stabilized using the
PFC method.}
\end{figure*}

Consider the two-dimensional neuromodule~\eqref{eq:Example2D}
discussed above and let~$\xf$ be some periodic orbit. The
stability function describes local stability at~$\xf$;
cf.~Figure~\ref{fig:2DSingle}. The region of stability in 
$(\alpha, \mu)$-parameter space is bounded by the lines 
$l_j(\alpha, \mu) = 1$ where $j=1,2$. The intersection of the half
planes defined
by the lines $\eqref{eq:Halfplane}$ gives the sector $C_{\lambda}$
that describes stability around $(\alpha, \mu)=0$ where $\lambda=(\lambda_1,
\lambda_2$) are the eigenvalues of $\drv{f_p}{\xf}$; 
cf.~Section~\ref{sect:StallCtrlProp}. Note that for 
fixed~$\alpha$, the range of $\mu$ which
yields stability becomes smaller for larger~$\alpha$.

\begin{figure}[b] 
\includegraphics[scale=\imagescaling]{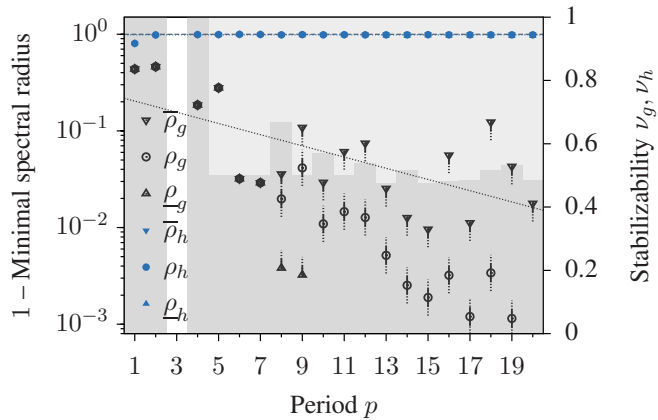}
\caption{\label{fig:ScalingStab2D}\CO{}Stalling PFC increases optimal
asymptotic convergence speed for the 2D-Neuromodule~\eqref{eq:Example2D}.
SPFC yields period-independent asymptotic convergence speed. The shading
indicates that more periodic orbits can be stabilized. The fraction of 
stabilizable orbits is shaded in gray; dark indicates stabilizability for
both with and
without stalling, light indicates stabilizability for SPFC only.}
\end{figure}

To compare the scaling of the spectral radius across periods, we plotted
the functions~\eqref{eq:StabPerG} and~\eqref{eq:StabPerH} in
Figure~\ref{fig:ScalingStab2D}. The original PFC method exhibits
asymptotic convergence speeds that approach one exponentially
for increasing period. A fit of  $\underline{\rho}$, corresponding
to the best
asymptotic convergence speed, by a function $\phi(x) = a\exp(-bx)$
yields a slope of $b=0.1334$. By contrast,
stalling the control significantly improves this scaling. We obtain
values close to zero for all periods $p\in\{1, \dotsc, 20\}$ and hence
period-independent asymptotic convergence speed in terms of evaluations
of $f_p$. A~fit with an exponential function of $\underline{\rho}_h(p)$,
i.e., the worst convergence speed, yields an exponent of 
$b=3.8112\cdot 10^{-8}$.

Qualitatively similar results are obtained for other two-dimensional
chaotic maps\cite{Bick2012} such as the H\'enon map~\cite{Henon1976}
and the Ikeda map~\cite{Ikeda1980} (not shown).

\begin{figure*} 
\subfigure[]{\includegraphics[scale=\imagescaling]{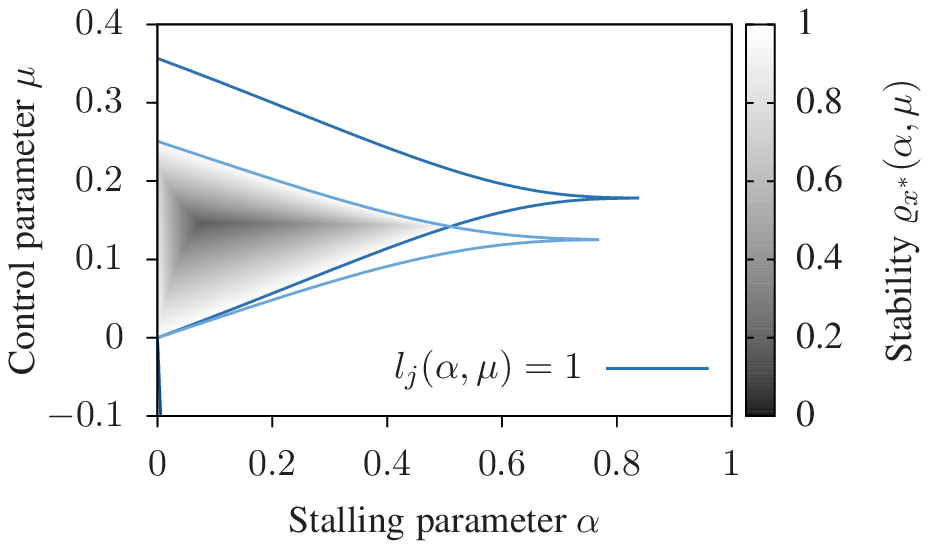}}\qquad\qquad
\subfigure[]{\includegraphics[scale=\imagescaling]{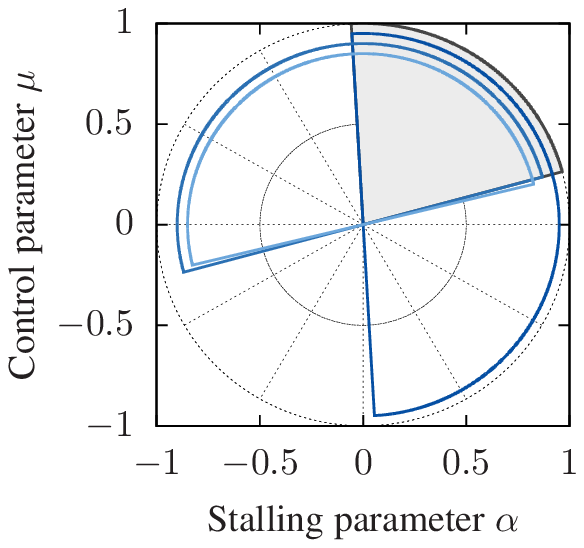}}
\caption{\label{fig:3DSingle}\CO{}Stability properties for a fixed point of
period $p=6$ of the three-dimensional H\'enon map~\eqref{eq:3DHenon}
with local stability given by $\lambda = (\lambda_1, \lambda_2, \lambda_3) =
(3.1125\cdot10^{-8}, -4.6072, -6.9734)$ show a region where stabilization
is successful. The stability function~\eqref{eq:StabilityFunction} is depicted
in Panel~(a) and the domain of stability $C_\lambda$ in Panel~(b), 
cf.~Figure~\ref{fig:2DSingle}.}
\end{figure*}

As an example of a three-dimensional system, we analyzed a
three-dimensional extension of the H\'enon map \cite{Baier1990}
given by 
\begin{equation}\label{eq:3DHenon}
f(x_1, x_2, x_3)=\left(a-x_2^2-bx_3, x_1, x_2\right)
\end{equation}
with parameters $a=1.76, b=0.1$. Stability properties of a
periodic orbit of period $p=6$ are depicted in 
Figure~\ref{fig:3DSingle}.

Due to additional constraints on stabilizability, the situation
is different compared to the two-dimensional example above. In our
example, the periodic orbits have a two-dimensional unstable manifold.
If both eigenvalues corresponding to that manifold are real, the
regime of stability depends on their sign and distance. If they have
opposite signs, the periodic orbit cannot be stabilized, neither with
nor without stalling. In case both eigenvalues have the same sign, the
situation is depicted in Figure~\ref{fig:3DSingle}; there is a maximal
value for~$\alpha$ beyond which stabilization fails. For a pair
of complex conjugate eigenvalues, the stability properties
depend on the quotient of the real and imaginary part; 
cf.~Figure~\ref{fig:PFCvSPFC}. In particular, if the imaginary
part is large, optimal asymptotic convergence speed is achieved
for the PFC method, i.e., for a choice of $\prn=0$.

\begin{figure} 
\includegraphics[scale=\imagescaling]{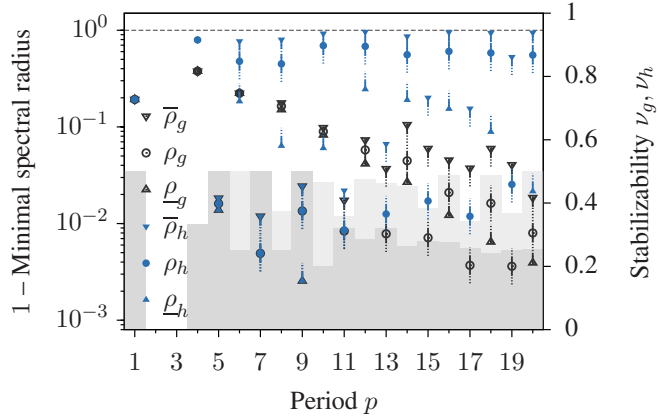}
\caption{\label{fig:ScalingStab3D}\CO{}Stalling Predictive Feedback
Control yields period-independent scaling for periodic orbits of
even period for the three-dimensional H\'enon 
generalization~\eqref{eq:3DHenon}. Effectivity of stalling
for odd periods increases with increasing period. 
The number of stabilizable periodic orbits~\eqref{eq:Stabilizability}
roughly doubles for higher periods as indicated by the shading, 
cf.~Figure~\ref{fig:ScalingStab2D}.}
\end{figure}

When looking at the scaling of optimal asymptotic convergence
speed across periods we have to distinguish between even and 
odd periods (Figure~\ref{fig:ScalingStab3D}). For even periods,
we obtain a
period-invariant scaling of both the mean and the best optimal
asymptotic convergence speed similar to the two-dimensional
system. While the upper bound on convergence speed will also
increase to one due to
the existence of periodic orbits with complex conjugate pairs,
it will typically stay above the best convergence speed for
the original PFC method. For
odd periods, the number of periodic orbits with complex
conjugate pairs of eigenvalues corresponding to the unstable
directions is large. Therefore, we see the same performance as
for the PFC method. Interestingly, for larger odd
periods $p>10$ stalling becomes more effective at increasing
optimal asymptotic convergence speed, boosting the best speed
close to one.

A similar scaling behavior is present in other three-dimensional
examples; period-independent scaling for even periods $p$
is observed for a three-dimensional
neuromodule~\cite{Pasemann2002} (not shown).

\subsection{Convergence speed in applications}\label{sec:Simulation}

The scaling of the spectral radius indicates only the best
possible asymptotic convergence speed for Stalled Predictive
Feedback Control, i.e., the speed for the linearized dynamics.
We ran simulations to compare the convergence speed for the
full nonlinear system with the theoretical results for the
linearized dynamics. In order to approximate a real-world
implementation where control is turned on at a ``arbitrary point
in time'' initial conditions were distributed randomly on the
attractor according to the chaotic dynamics.

To evaluate convergence speed of Stalled Predictive Feedback
Control, we compared the speed of $g_{\mu, p}=h^{0}_{\mu, p}$
with~$h^{\alpha}_{\mu, p}$ for both $\alpha = 3^{-1}$ and
$\alpha=(p+1)^{-1}$.
In terms of the parameters $\prm,\prn$, a value of
$\alpha = 3^{-1}$ corresponds to $\prm=2$, $\prn=1$ and 
$\alpha=(p+1)^{-1}$ to $\prm=p$, $\prn=1$.
In our implementation,
convergence time is the time~$T$ for the dynamics to satisfy
\begin{equation}\label{eq:StalledConvCrit}
\norm{x_T-\psi(x_T)}\leq\theta\ind{conv},
\end{equation}
where~$\psi$ is one of the functions above.
Convergence was only achieved if the criterion was fulfilled before
a timeout of $T\ind{timeout}=3000$ iterations. The
convergence times were rescaled to evaluations of~$f_p$ to make them
comparable. To calculate the best theoretical convergence time, we
calculated the smallest spectral radius 
\[\underline{\rho}^{\alpha}(p) = \min_{\xf\in\FPs_h(f,p)}
\inf_{\mu}\sr\big(\drv{h^{\alpha}_{\mu, p}}{\xf}\big)\]
for all periodic orbits of a given period $p$ with variable~$\mu$
while keeping the stalling parameter $\alpha(m,n)$ fixed. 
By assuming 
$\norm{\xf-x_\tau}=\norm{\xf-x_0}\big(\underline{\rho}^{\alpha}(p)\big)^{\tau}$
for the linear system we have that for an initial separation 
of $\norm{\xf-x_0}=d\ind{ini}$ the convergence 
criterion~\eqref{eq:StalledConvCrit} is satisfied for 
\begin{equation}\label{eq:TheorConvSpeed}
\tau^{\alpha}(p)=\left(\ln\!\left(\frac{\theta\ind{conv}}{d\ind{ini}}\right)-
\ln\!\left(1-\underline{\rho}^{\alpha}(p)\right)\right)\ln\!\left(\underline{\rho}^{\alpha}(p)\right)^{-1}
\end{equation}
Thus, $\tau^{\alpha}(p)$ is the convergence time of the linearized
system for an initial condition~$x_0$ with (period-independent) 
initial separation $d\ind{ini}$. For the simulations
presented here, we chose $\theta\ind{conv}=10^{-13}$ and 
$d\ind{ini}=0.1$.

\begin{figure*} 
\subfigure[]{\includegraphics[scale=\imagescaling]{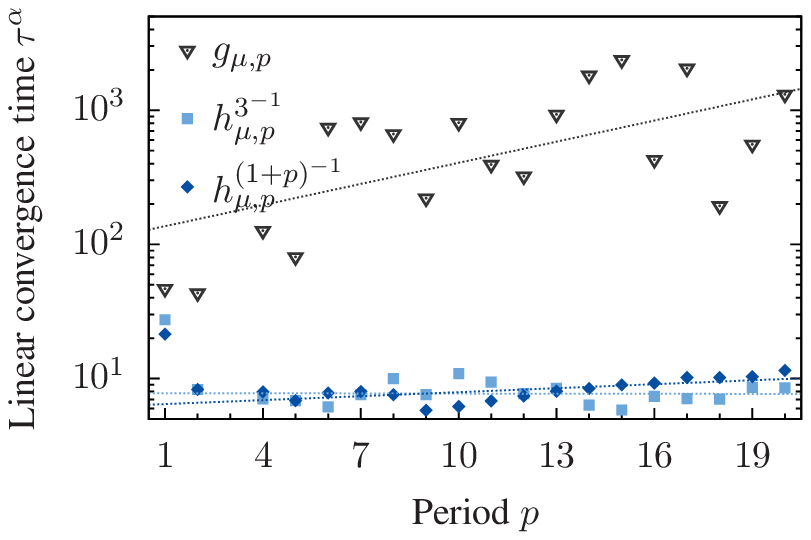}}\qquad\qquad
\subfigure[]{\includegraphics[scale=\imagescaling]{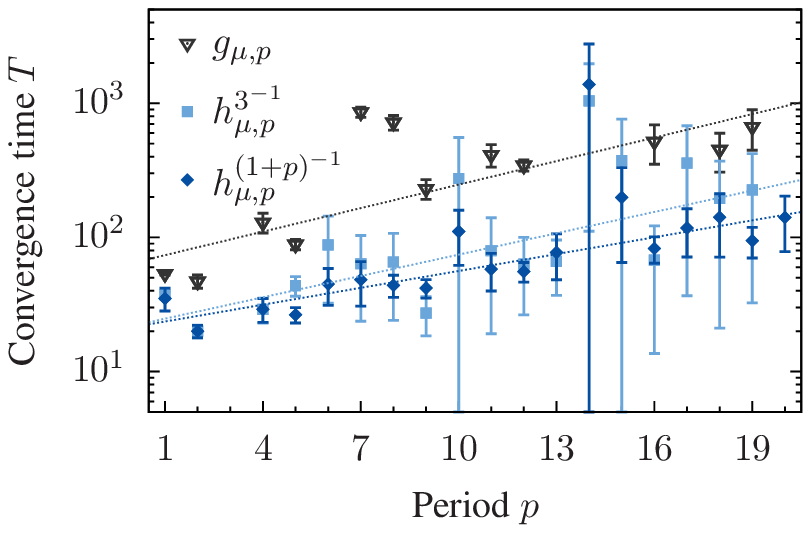}}
\caption{\label{fig:NumSpeed}\CO{}
Although the best convergence times obtained from numerical simulations
as shown in Panel~(b) cannot match the theoretical values of the linearized
system given by \eqref{eq:TheorConvSpeed}, shown in Panel~(a), stalling 
PFC increases both the overall convergence
times as well as the scaling across periods. Numerical simulations for
the two-dimensional neuromodule~\eqref{eq:Example2D} were performed
with initial conditions distributed randomly on the chaotic attractor.
Dashed lines represent an approximate exponential fit to
indicating the overall scaling behavior.}
\end{figure*}

The results are shown in Figure~\ref{fig:NumSpeed}. The errorbars
depict mean and standard deviation for all~$500$ runs with initial
conditions given by transient iteration of random length on the
attractor. The value of the
control parameter~$\mu$ in the numerical simulations was chosen for
each period to be the optimal value that yielded at least a fraction
of~$0.95$ of convergent initial conditions. In other words, $\mu$
was chosen to yield the optimal
speed with at least~$95\%$ reliability.

As predicted by the calculation of the spectral radius, stalling
PFC leads to an increase in convergence speed across all periods. 
A scaling of convergence times (scaling is indicated by dashed lines)
which is almost period-independent as observed in the theoretical
calculations cannot be achieved in our simulations. This is due
to several factors. First, in contrast to the linearized dynamics,
the numerical simulations take the full nonlinear system into 
account. This includes the influence of the transient dynamics 
and the increasing complexity of the phase space (the number
of fixed points increases with increasing period) on convergence
times. Second, in the theoretical calculations we consider only the
fixed point for which convergence is fastest. However, even in our
simulations, stalling improves both absolute convergence times
was well as their scaling across periods compared to classical
PFC. Furthermore, it increases
the number of periods that can be stabilized. For some periods,
only Stalled Predictive Feedback Control yields convergence within
a reasonable time. The scaling of the convergence speeds is 
independent of whether the stalling parameter is fixed or scales
with~$p$. However, a  period-dependent stalling parameter will
generally reduce the standard deviation of the different convergence
times.

\subsection{Relation to earlier results}\label{sec:Before}

\begin{figure*} 
\subfigure[]{\includegraphics[scale=\imagescaling]{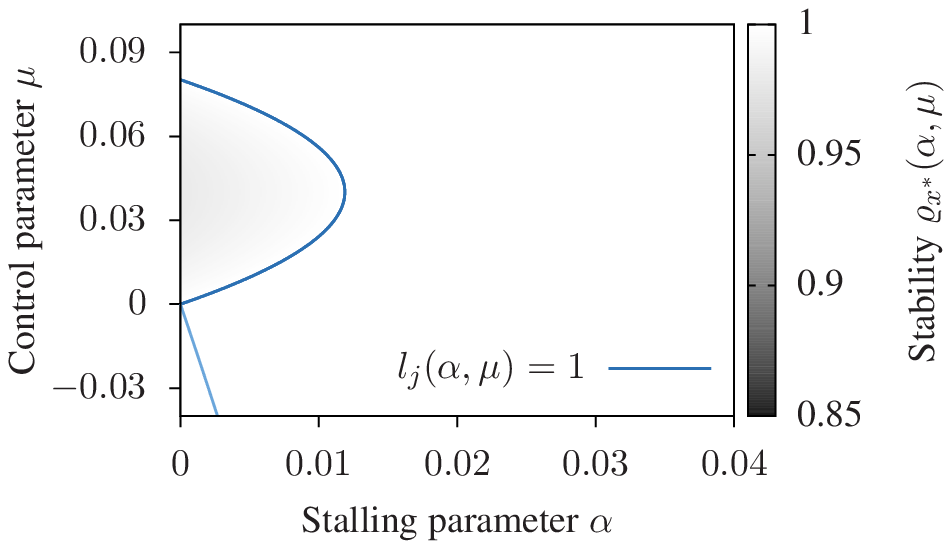}}\qquad\qquad
\subfigure[]{\includegraphics[scale=\imagescaling]{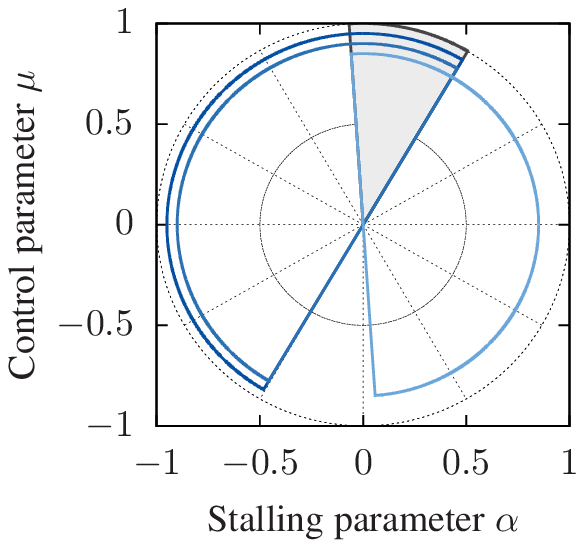}}
\caption{\label{fig:PFCvSPFC}\CO{}For a period $p=5$ orbit of the
three-dimensional H\'enon map~\eqref{eq:3DHenon} with local stability
properties $\lambda = (\lambda_1, \lambda_2, \lambda_3) = (0.0933+4.6673i,
0.0933-4.6673i, 0)$ with unstable directions given by a pair of complex
conjugated eigenvalues, only few choices of the stalling parameter allow
for stabilization (in particular, $m\gg 1$), cf.~Figures~\ref{fig:2DSingle}
and~\ref{fig:3DSingle}. Optimal performance is achieved for the PFC method,
i.e., with $n=0$.}
\end{figure*}

Stalled Predictive Feedback Control as defined in Definition~\ref{def:SPFC}
is a proper extension of the PFC method. In fact, the iteration of
$h_{\mu, 1}^{(1,1)}$ has been considered before in the context of
Predictive Feedback Control when trying to overcome the odd number
limitation~\cite{Schuster1997, Morgul2006} as well as in the context
of an experimental setup where measurements are 
time-delayed~\cite{Claussen2004}. These studies were only concerned with
whether or not fixed points can be stabilized, completely ignoring
the aspect of convergence speed. Although for systems of dimension
$\maxdim<3$ stalling control increases the number of fixed points that can be
stabilized; even for $\maxdim=3$ there are points that can be stabilized using
PFC but not using SPFC when the stalling parameter~$\alpha$ is as large
as in \cite{Schuster1997, Morgul2006, Claussen2004}
(Figure~\ref{fig:PFCvSPFC}). Hence, the introduction of an arbitrary
stalling parameter
is the key to both maximizing the number of fixed points subject to
stabilization through PFC as well as minimizing the convergence
speed.

The idea of periodically turning control on and off has been mentioned
before in the literature on control theory; both ``act-and-wait''
control~\cite{Insperger2007} and ``intermittent'' 
control~\cite{Gawthrop2010} are stated for linear control
problems in discrete and continuous time. At the same time, for linear
control problems with many control parameters, ``pole placement''
techniques~\cite{Sontag1998} are used to control the eigenvalues of
the linearization.
By contrast, SPFC aims at stabilizing many unstable periodic orbits
of a given nonlinear
system maintaining the simplicity of the simple one-parameter feedback
control scheme. The situation where control is turned on at
an arbitrary point in time as described above is of particular
interest; here, the system is likely to be far from the linear
regime. As shown above, stalling PFC improves performance even in
this situation.

Stalling Predictive Feedback Control is also related to a recent
application of chaos control~\cite{Steingrube2010}. Because of
implementation restraints, Steingrube~{et.~al.~effectively}
iterated \mbox{$f\circ g_{\mu, p}$}. In some sense, this is similar
to iterating $h_{\mu, p}^{(p, 1)}$, but the stability analysis is
not straightforward
since one has to keep track of the (changing) point on the periodic
orbit to be stabilized when iterating \mbox{$f\circ g_{\mu, p}$}. 
Moreover, both this control and SPFC are
related to an effort by Polyak~\cite{Polyak2005} to introduce a
generalized PFC method, which is capable of stabilizing periodic
orbits with an arbitrary small perturbation. This method, however,
is limited in applicability, because the control perturbation depends on
predictions of the state of the system many time steps in the
future.

\section{Adaptive Control}\label{sec:Adaptation}

In the previous sections, we showed that for an optimal choice of
parameters the asymptotic convergence speed of Predictive Feedback
Control can be significantly increased when stalling control.
This speedup is not only of theoretical nature, but also persists
in an implementation with random initial conditions. Bot how does
one find the set of optimal parameter values for a given chaotic 
map~$f$? If no a~priori estimates are available, adaptation methods 
provide a way to tune the control parameters online for optimal 
convergence speed.

Here, we consider the case where the stalling parameter~$\alpha$
(corresponding to some choice of $m,n$) is fixed and $\mu\geq 0$
is subject to adaptation. We explore different
adaptation mechanisms and propose a hybrid gradient adaptation
approach that leads to fast and highly reliable adaptation across
different periods for initial conditions distributed randomly on
the chaotic attractor.

\subsection{Simple and gradient adaptation}

First, recall a simple adaptation scheme\cite{Steingrube2010}. We 
assume that the period~$p$ is fixed within this subsection. A 
suitable objective function for 
finding a periodic point of period~$p$ is given by
\[G_1(x, p) = \norm{f_p(x)-x}^2\]
for some vector norm $\enorm$ on $\Rn$. For $\mu = 0$ the map~$h_{0, p}$
as defined in~\eqref{eq:DelayedStab}.
reduces to some iterate of~$f$ and
adaptation should lead to sequences $x_k\to\xf$ and $\mu_k\to\mu^*$
with $\xf\in\FP(f_p)$ and $\sr_h(\alpha, \mu^*) < 1$. The
objective function above suggests a simple adaptation rule (SiA) with
\begin{equation}\label{eq:Adapt}
\Delta\mu_k = \nu(p) G_1(x_k, p)
\end{equation}
where $\nu(p)$ is the (possibly period-dependent) adaptation parameter
and dynamics of $\mu$ given by
\begin{equation}\label{eq:MuChange}
\mu_0=0,\quad \mu_{k+1} = \mu_k+\Delta\mu_k.
\end{equation}
This adaptation rule increases the control parameter $\mu$ monotonically.
Suppose that~$\xf$ is a fixed point of $f$, i.e., \mbox{$f_p(\xf)=\xf$}.
If we have a converging sequence $x_k\to \xf$ as $k\to\infty$ then
the sequence $\Delta\mu_k$ tends to zero. In other words, adaptation
stops in the vicinity of a fixed point~$\xf$ of~$f_p$.

For this adaptation mechanism, the quantity $\Delta\mu_k$ is
extremely easy to calculate and yields decent results in
applications~\cite{Steingrube2010}. Adaptation, however, strongly depends
on the choice of the adaptation parameter~$\nu(p)$. If~$\nu(p)$ is too
small, it will take a long time
to reach a regime in which convergence takes place. On the other hand,
if~$\nu(p)$ is too large and the interval~$M$ of possible values 
of~$\mu$ in
which convergence takes place is rather narrow, it is possible that
$\mu_k > \sup M$ for some $k$, even if $\mu_{l}\in M$ for some values
$l<k$. Hence, it is possible for the control parameter to ``jump
out of'' the range of stability. Also, note
that by construction, this simple adaptation will not optimize for
asymptotic convergence speed. For small $\nu(p)$, adaptation will 
stop close to the boundary of the convergent regime, leading to slow
asymptotic convergence speed; cf.~Figure~\ref{fig:MuDyn}~(a).

Adaptation may be improved, if the objective function takes local
stability into account. For some matrix norm $\enorm$, such an objective
function is given by
\[G_2(x, \mu, p)=\norm{\drv{h_{\mu, p}}{x}}\]
Since any matrix norm is an upper bound for the spectral radius $\vrho(A)$
of a matrix $A$, that is $\vrho(A)\leq\norm{A}$,
minimizing the norm potentially leads to increased convergence 
speed\cite{Bick2010b}. At the same time,
for a generic point on the attractor, this objective function is highly
nonconvex with steep slopes (Figure~\ref{fig:GlobalOpt}) making
straightforward minimization through, for example, gradient 
descent~\cite{Fradkov1998} difficult.

\begin{figure} 
\includegraphics[scale=\imagescaling]{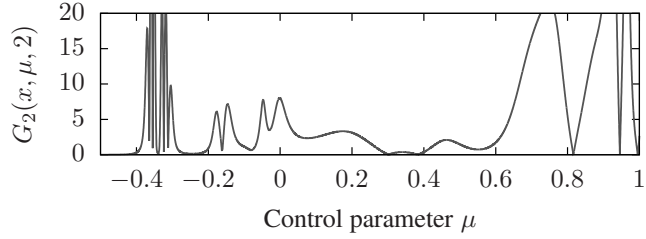}
\caption{\label{fig:GlobalOpt}The objective function $G_2(x, \mu, p)$ is
nonconvex for a generic point $x$ on the attractor leading to a
difficult optimization problem.}
\end{figure}

We therefore propose an adaptation rule that combines aspects of
simple adaptation as reviewed above and the objective function~$G_2$.
Let~$\partial_\mu$ denote the derivative with respect to~$\mu$
and define $\Theta(x) = \tanh\left((pG_1(x, p))^{-1}\right)$. Consider
the modified gradient adaptation rule (GrA) given by~\eqref{eq:MuChange}
with
\begin{multline}\label{eq:AdaptNew}
\Delta\mu_k = \lambda(p)\left(G_1(x, p)\right. \\ -\left. p\tanh\left(\Theta(x_k) \partial_\mu G_2(h_{\mu, p}(x_k), \mu, p)\right)\right).
\end{multline}
This adaptation rule has the following properties. Far
away from a period~$p$ orbit $\xf\in\FP(f_p)$, i.e. for 
$G_1(x, p)\gg 0$, we have 
$\Theta(x)\approx 0$. Therefore, adaptation is dominated by the 
first term and leads to adaptation as given by the simple adaptation 
rule~\eqref{eq:Adapt} to increase~$\mu$ to reach a regime of
convergence. On the other hand, in the vicinity of a fixed point
we have $\Theta(x)\approx 1$ and $G_1(x, p)\approx 0$. Hence, adaptation
occurs by bounded gradient descent and the dynamics of the control
parameter~$\mu$ are perpendicular to the level sets of the objective
function~$G_2$ towards a (local) minimum. The bound induced by 
the~$\tanh$ prevents large fluctuations of the objective function~$G_2$
from leading to a too large change of the control 
parameter~$\mu$.

\begin{figure*}[h]
\subfigure[]{\raisebox{3pt}{\includegraphics[scale=\imagescaling]{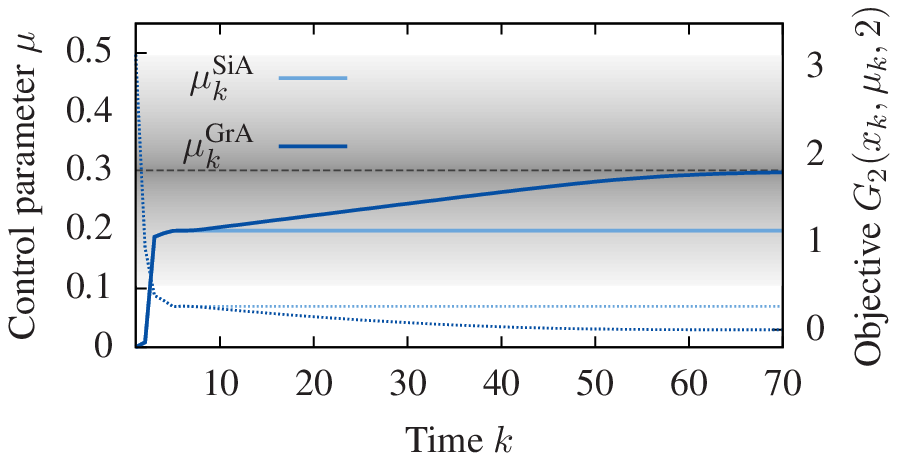}}}\qquad
\subfigure[]{\includegraphics[scale=\imagescaling]{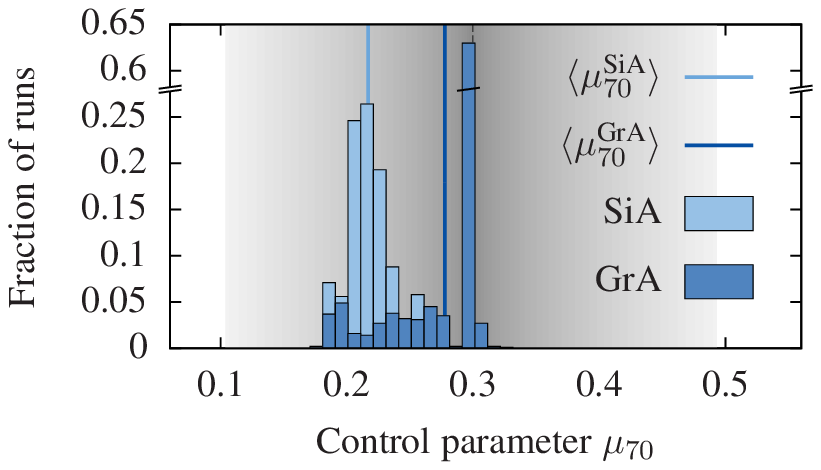}}
\caption{\label{fig:MuDyn}\CO{}In contrast to simple adaptation, gradient
adaptation tunes the control parameter to the value where optimal convergence
speed is achieved. The dynamics for a single run are shown in Panel~(a)
and the dotted lines depict the value of the objective function~$G_2$.
Statistics for 1000 initial conditions on the attractor after a transient of 
random length are shown in Panel~(b). The shading indicates
values of the stability function smaller than one and the dashed line its
minimum (optimal asymptotic convergence speed).
The target period was $p=2$ for the two-dimensional
map~\eqref{eq:Example2D} with adaptation parameter $\nu=10^{-3}$
and $\prn=1$, $\prm=2$. Here, $\langle\ \cdot\ \rangle$ denotes the
population mean.
}
\end{figure*}

\begin{figure*}[h]
\subfigure[]{\includegraphics[scale=\imagescaling]{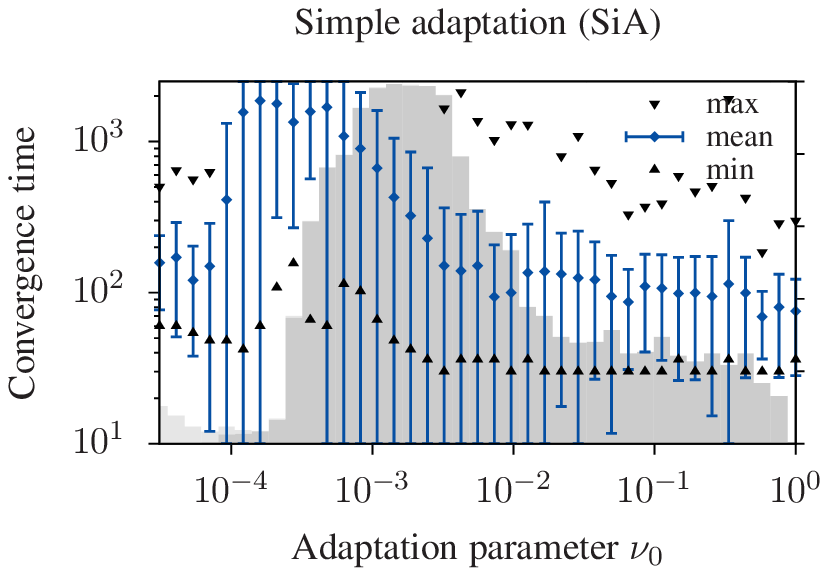}}\qquad
\subfigure[]{\includegraphics[scale=\imagescaling]{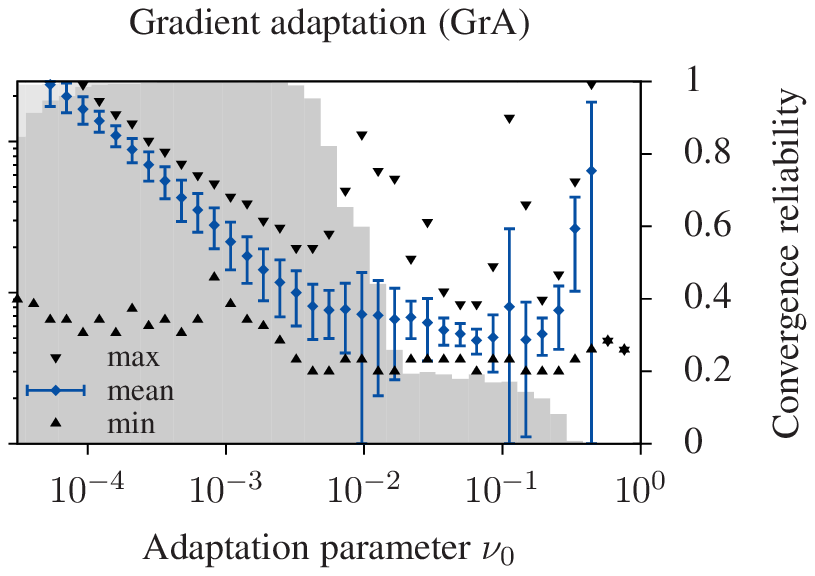}}
\caption{\label{fig:SinglePer}\CO{}Gradient adaptation (Panel~(b)) decreases the
overall convergence times and the variaton thereof compared to simple
adaptation (Panel~(a)) for target period $p=5$. Furthermore, the range
of reliable convergence, depicted by the shading in the background, is
broadened. The fraction of convergent runs to a periodic orbit of the
correct period is shaded in dark gray (reliable convergence) and to an
incorrect period in light gray.}
\end{figure*}

\begin{figure*}[h]
\subfigure[]{\includegraphics[scale=\imagescaling]{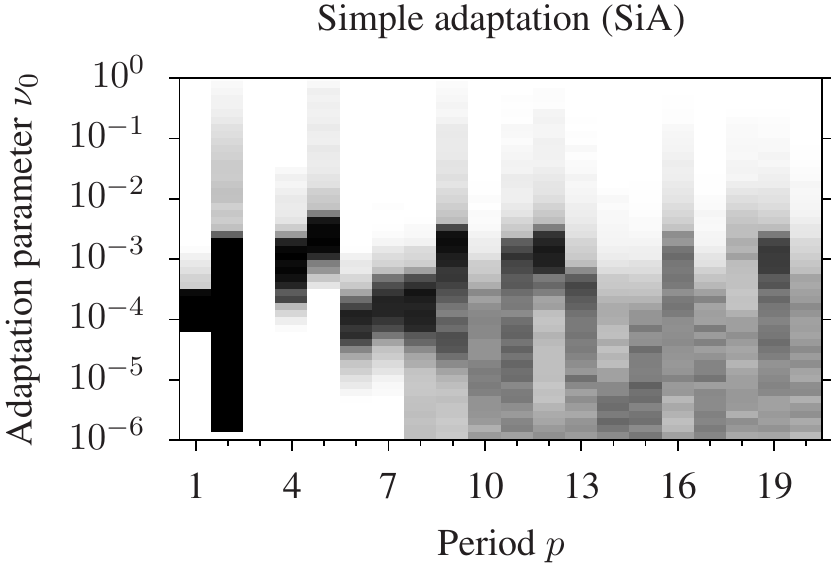}}\qquad
\subfigure[]{\includegraphics[scale=\imagescaling]{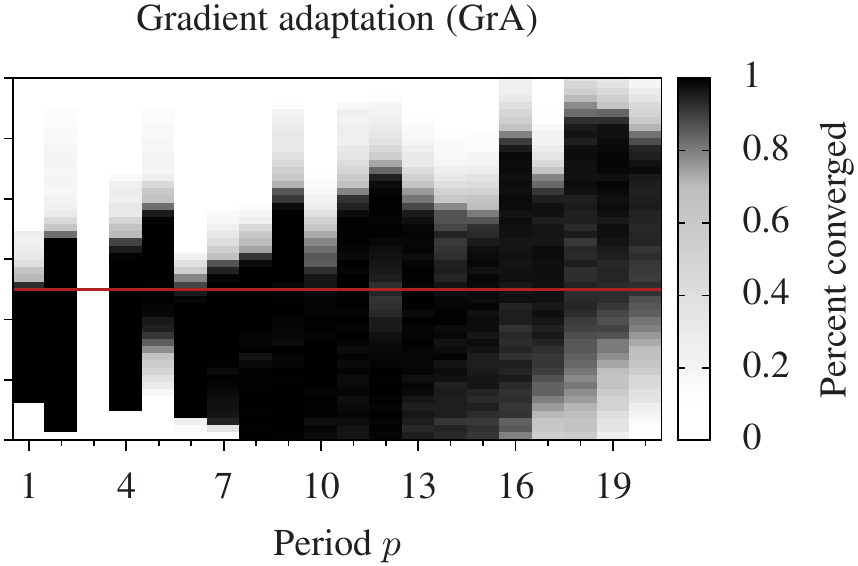}}
\caption{\label{fig:Reliability}Gradient adaptation (Panel (b)) increases
the overall reliability of convergence across periods compared to
simple adaptation (Panel (a)). Reliability, i.e., the percentage of 
convergent runs to periodic orbits of the target period $p$, is depicted 
by the color
of the shading for the adaptation parameter $\nu_0$ and hence more
dark areas correspond to higher overall reliability. For gradient 
adaptation there exist parameter values which yield reliable 
convergence across all periods ($\nu_0=10^{-3.5}$ is depicted by a red line).}
\end{figure*}

The adaptation parameter $\nu(p)$ again determines the size of
the adaptation steps. In contrast to the simple adaptation method,
the modified gradient adaptation adapts bidirectionally in order
to minimize both objective functions $G_1$ and $G_2$ as depicted in
Figure~\ref{fig:MuDyn}(a). Clearly, the control parameter is adapted
to the regime of stability of a periodic orbit by the modified
gradient adaptation and $\Delta\mu_k\to 0$ as optimal asymptotic
convergence speed is achieved. Statistics for a large number of
initial conditions show that the population mean $\langle\mu_k\rangle$
for many runs is already close to the optimal value after only 70
iterations; cf.~Figure~\ref{fig:MuDyn}(b).

\subsection{Convergence reliability}

To assess the performance of the adaptive Stalled Predictive
Feedback Chaos Control algorithm in a real-world application 
we performed large scale numerical simulations for the 
two-dimensional neuromodule~\eqref{eq:Example2D}. Periodic 
orbits were stabilized using SPFC~\eqref{eq:DelayedStab} with 
the incorporation of the adaptation mechanisms given 
by~\eqref{eq:Adapt} 
and~\eqref{eq:MuChange}. The scaling of the adaptation parameter 
was given
by $\nu(p)=\frac{\nu_0}{p}$ and for every~$\nu_0$ we iterated 
for 500 initial conditions distributed randomly on the chaotic 
attractor by iterating for a transient of random length. To 
determine reliability, i.e., the fraction of runs where
the trajectory converged to a periodic orbit of the desired period,
we checked the period of the limiting periodic orbit (if any) to a
threshold of $\theta=10^{-6}$.

As discussed above, the adaptation parameter $\nu_0$ influences both
speed and reliability. The results for period $p=5$ are plotted in 
Figure~\ref{fig:SinglePer}. We find that Gradient Adaptation not 
only decreases
the total number of time steps needed to fulfill the convergence
criterion but it also decreases the overall variation across runs
(the standard deviation is depicted as an error bar). Of particular
interest for applications is the range where convergence is highly
reliable. In contrast to the simple adaptation scheme, for gradient
adaptation the range of adaptation parameter values leading to
highly reliable convergence is broadened. On the one hand, the 
gradient adaptation method optimizes for convergence speed, thereby 
increasing the chance
that the convergence criterion is fulfilled before the timeout. At the
same time, the bidirectional adaptation decreases the likelihood of 
the control parameter leaving the regime of convergence. Gradient 
adaptation therefore
improves both overall convergence speed while reducing its 
variation and increasing the reliability of control.

The improvement of reliability compared to the simple adaptation
scheme can be seen across all periods; cf.~Figure~\ref{fig:Reliability}.
The broad range of adaptation parameters giving highly reliable
convergence allows for the choice of an adaptation parameter
$\nu_0$ that will lead to reliable convergence across different
periods, effectively eliminating this parameter.

Similar results are obtained for numerical simulations for
other two- as well as three-dimensional chaotic maps (not shown). These
include the H\'enon map\cite{Bick2012} and a three-dimensional 
neuromodule\cite{Pasemann2002}.
Convergence speed of~$\mu_k$ to the optimal parameter value
can be further increased by using higher order methods, such
as Newton's method (not shown). The use of higher order methods (also with
respect to comparing simple and gradient adaptation) comes with
a higher absolute computational cost. For any 
implementation the improvement always needs to be related to the
effective improvement.

\section{Discussion}

In this article, we studied the effect stalling has on Predictive
Feedback Control. By stalling control, the inherent speed limit of
standard Predictive Feedback Control may be overcome. We highlighted
that only by taking all possible stalling parameters into account,
the maximum number of periodic orbits can be stabilized. The 
conditions on stabilizability that we derived show that stabilizability
is reduced to the conditions imposed by the eigenvalues corresponding
to the unstable directions. Stalling is very easy to implement
and, in addition to increasing convergence speed, the resulting
chaos control method is capable of stabilizing more
periodic orbits. Using numerical simulations we showed that in
applications where chaos control is turned on at a random point
in time, convergence speed is greatly improved across all periods.
Although our method was stated in terms of discrete time dynamical
systems, it also applies to continuous time dynamics if discretized
for example through a Poincar\'e map.

As examples we studied ``typical'' low-dimensional chaotic systems.
In higher dimensions, for example when studying chaotic collective
effects in networks, we expect our method to behave qualitatively
similar as in the three-dimensional case, although an increase in
dimension of the unstable manifold of periodic orbits places
additional constraints on stabilizability. A priori estimates of
the local stability properties of the periodic orbits embedded in
the attractor yield an estimate of how many periodic orbits can be
stabilized. This limitation could be overcome by tuning the
eigenvalue corresponding to some eigenvector separately. From
a mathematical point of view, a different approach would be to allow
the control parameter to take complex values, turning the problem
into one of complex dynamics in several complex variables
\cite{Bick2010b}. On the other hand, the
local stability property conditions provide design principles for
attractors to contain many unstable periodic orbits that our
Stalled Predictive Feedback Control method is capable of
stabilizing. These important questions, however, are beyond
the scope of the current article and will have to be addressed
in further research.

Conversely, the local stability properties and the narrowing
of the regime of stability for the control parameter $\mu$ while
$\alpha > 0$ is fixed can actually be exploited. Different local
stability properties of the unstable periodic orbits allow for
stabilization of a specific set of periodic orbits. Hence, through
the choice of parameters, the targeted periodic orbits can become
stable periodic orbits of the dynamics.

Adaptation mechanisms not only provide a way to tune the
adaptation parameter to a suitable value, but they also allow
for an increase in both speed and reliability. In contrast to
previously proposed adaptation~\cite{Steingrube2010, Lehnert2011}, 
the proposed hybrid algorithm also adapts for optimal convergence 
speed. A~broad range
of parameters allows for a period-independent choice of adaptation
parameter, hence giving a chaos control method with a set of
parameters for which it stabilizes many periodic points of most
periods quickly and reliably. Adaptation using the objective
function~\eqref{eq:Adapt} also prevents the system from converging
to one of the periodic orbits potentially induced by stalling control.
However, as our adaptation method merely serves as a proof of
concept, it still leaves room for improvement. In particular,
the cap of adaptation speed through the sigmoidal function is
a major source of slowdown. Moreover, adaptation
could be extended to the stalling parameter~$\alpha$.

Since stalling PFC increases the number of evaluations of~$f_p$
needed for a single iteration of $h_{p, \mu}$, it would be 
desirable to extend the theory to a ``fractional stalling parameter,'' i.e.,
to allow for stalling by composing with $\ite{f}{q}$ where $q<p$. With such
stalling, however, one needs to track the point of the periodic
orbit, as discussed in Section~\ref{sec:Before}, rendering the
theoretical analysis more subtle.

In conclusion, Stalled Predictive Feedback Control of Chaos
together with a suitable adaptation scheme is a step towards
a fast, reliable, easy-to-implement, and broadly applicable
chaos control method. It would be interesting to see it applied
in experimental setups in the future.

\section*{Acknowledgements}

CB would like to thank Laurent Bartholdi for
making this project possible. This work was supported by the Federal
Ministry of Education and Research (BMBF) by grant numbers 01GQ1005A and
01GQ1005B.

\bibliography{ChaosControl}

\end{document}